\newcommand{\texorpdfstring}[2]{#1}   
\newcommand{\url}[1]{#1} 
\definecolor{gray}{rgb}{0.2,0.2,.2}
\DeclareMathOperator{\sinc}{\mathrm{sinc}}
\DeclareMathOperator{\id}{\mathrm{id}}
\newcommand{\even}{\mathrm{even}}
\def\longrightharpoonup{
\relbar\joinrel\joinrel\relbar\joinrel\joinrel\relbar\joinrel\joinrel\relbar\joinrel\joinrel\relbar\joinrel\joinrel\relbar\joinrel\joinrel\rightharpoonup}
\newcommand{\xrightharpoonup}[1]{\stackrel{#1}{\longrightharpoonup}}
\newcommand{\tdots}{{...}}%
\newcommand{\upidx}[1]{{\at{#1}}}
\newcommand{\D}{\displaystyle}
\newcommand{\ul}[1]{\underline{#1}}
\newcommand{\bigpar}{\par\quad\newline\noindent}
\newcommand{\dint}[1]{\,\mathrm{d}#1}
\newcommand{\iu}{\mathtt{i}}
\newcommand{\mhexp}[1]{{{\mathtt{e}}^{#1}}}
\newcommand{\fspace}[1]{{\mathsf{#1}}}
\newcommand{\fspaceL}{\fspace{L}}
\newcommand{\fspaceW}{\fspace{W}}
\newcommand{\Rset}{{\mathbb{R}}}
\newcommand{\Nset}{{\mathbb{N}}}
\newcommand{\ocinterval}[2]{(#1,\,#2]}%
\newcommand{\ccinterval}[2]{[#1,\,#2]}%
\newcommand{\DO}[1]{{O\at{#1}}}
\newcommand{\skp}[2]{{\left\langle{#1},\,{#2}\right\rangle}}
\newcommand{\pair}[2]{{\left({#1},\,{#2}\right)}}
\newcommand{\at}[1]{{\left({#1}\right)}}
\newcommand{\bat}[1]{{\big(#1\big)}}
\newcommand{\Bat}[1]{{\Big(#1\Big)}}
\newcommand{\ato}[1]{{\left[{#1}\right]}}
\newcommand{\norm}[1]{\left\|{#1}\right\|}
\newcommand{\nnorm}[1]{\|{#1}\|}
\newcommand{\bnorm}[1]{\big\|{#1}\big\|}
\newcommand{\Bnorm}[1]{\Big\|{#1}\Big\|}
\newcommand{\abs}[1]{\left|{#1}\right|}
\newcommand{\babs}[1]{\big|{#1}\big|}
\newcommand{\Babs}[1]{\Big|{#1}\Big|}
\newcommand{\al}{{\alpha}}
\newcommand{\be}{{\beta}}
\newcommand{\ga}{{\gamma}}
\newcommand{\eps}{{\varepsilon}}
\newcommand{\ka}{{\kappa}}
\newcommand{\la}{{\lambda}}
\newcommand{\om}{{\omega}}
\newcommand{\calA}{\mathcal{A}}
\newcommand{\calB}{\mathcal{B}}
\newcommand{\calF}{\mathcal{F}}
\newcommand{\calI}{\mathcal{I}}
\newcommand{\calL}{\mathcal{L}}
\newcommand{\calM}{\mathcal{M}}
\newcommand{\calN}{\mathcal{N}}
\newcommand{\calP}{\mathcal{P}}
\newcommand{\calQ}{\mathcal{Q}}
\theoremstyle{plain}
\newtheorem{theorem}{Theorem}[]
\newtheorem{corollary}      [theorem]{Corollary}
\newtheorem{lemma}         [theorem]{Lemma}
\newtheorem*{result*}{Main result}
\newtheorem*{problem*}{Open problems}
\newtheorem{assumption} [theorem]{Assumption}
\begin{document}


\title{KdV waves in atomic chains with nonlocal interactions}

\date{\today}

\author{%
  Michael Herrmann\footnote{University of M\"unster, Institute for Applied Mathematics, {\tt michael.herrmann@uni-muenster.de}} \and Alice Mikikits-Leitner\footnote{Technical University Munich, Center for Mathematics, {\tt mikikits@ma.tum.de}}
 }

\maketitle


 \begin{abstract} 
We consider atomic chains with nonlocal particle interactions and prove the existence of near-sonic solitary waves. Both our result and the general proof strategy are reminiscent of the seminal paper by Friesecke and Pego on the KdV limit of chains with nearest neighbor interactions but differ in the following two aspects: First, we allow for a wider class of atomic systems and must hence replace the distance profile by the velocity profile. Second, in the asymptotic analysis we avoid a detailed Fourier pole characterization of the nonlocal integral operators and employ the contraction mapping principle to solve the final fixed point problem.
\end{abstract}


 \quad\newline\noindent%
 \begin{minipage}[t]{0.15\textwidth}%
Keywords: 
 \end{minipage}%
 \begin{minipage}[t]{0.8\textwidth}%
\emph{asymptotic analysis}, \emph{KdV limit of lattice waves},\\
\emph{Hamiltonian lattices with nonlocal coupling} 
 \end{minipage}%
 \medskip
 \newline\noindent
 \begin{minipage}[t]{0.15\textwidth}%
   MSC (2010): %
 \end{minipage}%
 \begin{minipage}[t]{0.8\textwidth}%
37K60,  
37K40,  
74H10  	
 \end{minipage}%
%
%
\setcounter{tocdepth}{3}
\setcounter{secnumdepth}{3}{\scriptsize{\tableofcontents}}
%
%
%
\section{Introduction}\label{sect:intro}
%
%
Since the pioneering paper \cite{ZK65}, the so-called KdV limit of atomic chains with nearest neighbor interactions -- often called Fermi-Pasta-Ulam or FPU-type chains -- has attracted a lot of interest in both the physics and the mathematics community, see \cite{FM14} for a recent overview. The key observation is that in the limiting case of long-wave-length data with small amplitudes the dynamics of the nonlinear lattice system is governed by the Korteweg-de Vries (KdV) equation, which is a completely integrable PDE and hence well understood. For rigorous results concerning initial value problems 
we refer to \cite{SW00} and to \cite{CCPG12,GMWZ14} for similar result in chains with periodically varying masses.
\par
Of particular interest are the existence of KdV-like solitary waves and their stability with respect to the FPU dynamics. Both problems have been investigated by Friesecke and Pego in the seminal four-paper series \cite{FP99,FP02,FP04a,FP04b}, see also \cite{HW13} for simplifications in the stability proof and \cite{FM14} concerning the existence of periodic KdV-type waves. The more general cases of two or finitely many solitary waves have been studied in \cite{HW08,HW09} and \cite{Miz11,Miz13}, respectively. In this paper we generalize the existence result from \cite{FP99} and prove that chains with interactions between further than nearest-neighbors also admit KdV-type solitary waves. The corresponding stability problem is beyond the scope of this paper and left for future research.
%
%
\subsection{Setting of the problem}
%
We consider an infinite chain of identical particles which interact with up to $M$ neighbors on both sides. Assuming unit mass, the equations of motion are therefore given by
\begin{align}
\label{LawOfMotion}
\ddot{u}_j=\sum_{m=1}^M\Phi_m^\prime\at{u_{j+m}-u_j}-
\Phi_m^\prime\at{u_{j}-u_{j-m}}\,,
\end{align}
where $u_j\at{t}$ denotes the position of particle $j$ at time $t$. Moreover, the potential $\Phi_1$ describes the interactions between nearest-neighbors, $\Phi_2$ between the next-to-nearest-neighbors, and so on. 
\par
A traveling wave is an exact solution to \eqref{LawOfMotion} which satisfies
\begin{align}
\label{eqn:TW.ansatz}
u_j\at{t}=r_*j + v_*t+\eps\,U_\eps \at{x}\,,\qquad x :=\eps{j}-\eps{c_\eps t}\,,
\end{align}
where the parameters $r_*$ and $v_*$ denote the prescribed background strain and background velocity,  respectively. Moreover, $\eps>0$ is an additional scaling parameter which will be identified below and becomes small in the KdV limit. A direct computation reveals that the wave speed $c_\eps$ as well as the rescaled wave profile $U_\eps$ must solve the rescaled traveling wave equation
\begin{align} 
\label{eq:scaledfpu1}
\eps^3c_\eps^2\,U^{\prime\prime}_\eps=\sum_{m=1}^M m\eps\nabla_{-m\eps}\Phi^\prime_m\at{mr_*+ m\eps^2\nabla_{+m\eps}U_\eps}\,,
\end{align}
where the discrete differential operators are defined by
\begin{align} 
\label{eq:diffoperators}
\at{\nabla_{+m\eps}Y}\at{x}:=\frac{Y\at{x+m\eps}-Y\at{x}}{m\eps}\,,\qquad
\at{\nabla_{-m\eps}Y}\at{x}:=\frac{Y\at{x}-Y\at{x-m\eps}}{m\eps}\,.
\end{align}
Note that $v_*$ does not appear in \eqref{eq:scaledfpu1} due to the Galilean invariance of the problem and that the solution set is invariant under the addition of constants to $U_\eps$. It is therefore natural to interpret \eqref{eq:scaledfpu1} as an equation for the rescaled velocity profile $W_\eps:=U^\prime_\eps$; the corresponding distance or strain profile $\nabla_{+\eps}U_\eps$ can then be computed by convoluting $W_\eps$ with the rescaled indicator function of an interval, see formula \eqref{eq:scaledfpu1a.px1} below.
\par
For $M=1$ and fixed $\eps>0$ there exist -- depending on the properties of $\Phi_1$ -- many different types of traveling waves with periodic, homoclinic, heteroclinic, or even more complex shape of the profile $W_\eps$, see for instance \cite{Her10,HR10, HMSZ13} and references therein. In the limit $\eps\to0$, however, the most fundamental waves are periodic and solitary waves, for which $W_\eps$ is either periodic or decays to $0$
as $x\to\pm\infty$. 
\par
In this paper we suppose $r_*=0$ -- this condition can always be ensured by elementary transformations -- and split off both the linear and the quadratic terms from the force functions $\Phi^\prime_m$. This reads
\begin{align}
\label{eqn:ForceTerms}
\Phi_m^\prime\at{r}=\al_m r + \beta_m r^2 + \Psi_m^\prime\at{r}\,,\qquad 
\Psi_m^{\prime}\at{r}=\DO{r^3}\,,\qquad m=1\tdots M
\end{align}
or, equivalently, $\Phi_m\at{r}=\tfrac12 \al_mr^2 + \tfrac13 \beta_m r^3 + \Psi_m\at{r}$ with $\Psi_m\at{r}=\DO{r^4}$.  In order to keep the presentation as simple as possible, we restrict our considerations to solitary waves -- the case of periodic profiles can be studied along the same lines -- and rely on the following standing assumption.
\begin{assumption}[properties of the interaction potentials]
\label{MainAssumption}
For all $m=1\tdots M$, the coefficients $\alpha_m$ and $\beta_m$ are positive. Moreover, $\Psi_m^\prime$ is continuously differentiable with $\Psi_m^\prime\at{0}=0$ and
\begin{align}
\label{MainAssumption.Eqn1}
\abs{\Psi_m^{\prime\prime}\at{r}}\leq \ga_m r^2
\end{align}
for some constants $\ga_m$ and all $r$ with $\abs{r}\leq 1$.
\end{assumption}
Note that the usual requirements for $M=1$ are $\al_1>0$ and $\be_1\neq0$ but the case $\be_1<0$ can be traced back to the case $\be_1>0$ by a simple reflection argument with respect to the strain variable $r$. Below we discuss possible generalizations of Assumption \ref{MainAssumption} including cases in which the coefficients come with different signs.
\begin{figure}[t!]%
\centering{%
\includegraphics[width=0.6\textwidth]{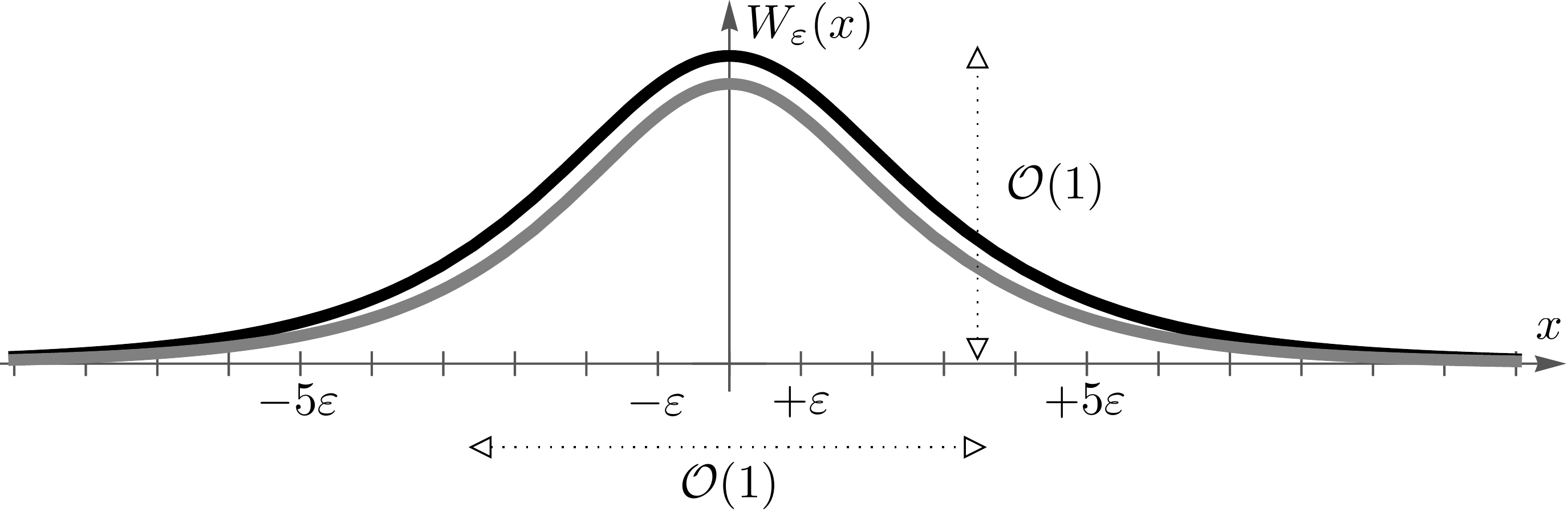}%
}%
\caption{
Sketch of the rescaled velocity profile $W_\eps$ for $\eps>0$ (black) and $\eps=0$ (gray) as function of the rescaled phase variable $x$. The grid with spacing $\eps>0$ describes the rescaled particle index $\eps j$ while the dashed arrows indicate the height and the width of the pulse $W_\eps$. The rescaled distance profile $\calA_\eps W_\eps$ has a similar shape.
}%
\label{Fig0}%
\end{figure}%
%
%
%
\subsection{Overview on the main result and the proof strategy}
%
The overall strategy for proving the existence of KdV-type solitary waves in the lattice system \eqref{LawOfMotion} is similar to the approach in \cite{FP99} but many aspects are different due to the nonlocal coupling. In particular, we base our analysis on the velocity profile 
\begin{align}
\label{Eqn:Def.Vel.prof}
W_\eps:=U_\eps^\prime
\end{align}
instead of the distance profile $\nabla_{\eps}U_\eps$, deviate in the justification of the key asymptotic estimates, and solve the final nonlinear corrector problem by the Banach fixed-point theorem. A more detailed comparison is given throughout the paper.
\par
As for the classical case $M=1$, we prescribe a wave speed $c_\eps$ that is slightly larger than the sound speed $c_0$ and construct  profile functions that satisfy \eqref{eq:scaledfpu1} and decay for $x\to\pm\infty$. More precisely, we set
\begin{align} 
\label{eq:speedscaling}%
c_\eps^2 := c_0^2+\eps^2\, ,\qquad 
c_0^2:=\sum_{m=1}^M \al_m m^2>0\,,
\end{align}
i.e., the small parameter $\eps>0$ quantifies the supersonicity of the wave. Note that the subsonic case $c_\eps<c_0$ is also interesting but not related to solitary waves, see discussions at the end of \S\ref{sect:prelim} and the end of \S\ref{sect:proof}. 
\par
The asymptotic analysis from \S\ref{sect:prelim} reveals that the limiting  problem as $\eps\to0$ is the nonlinear ODE 
\begin{align}
\label{Eqn:WaveODE}
W_0^{\prime\prime} = d_1 W_0- d_2 W_0^2\,,
\end{align}
where the positive constants $d_1$ and $d_2$ depend explicitly on the coefficient $\al_m$ and $\be_m$, see formula \eqref{Eqn:LeadingOrderProblem.2x} below. This equation admits a homoclinic solution, which is unique up to shifts (see \S\ref{sect:proof.1}) and provides via $w\pair{t}{x}=W_0\at{x- t}$ a solitary wave to the KdV equation
\begin{align*}
d_1\,\partial_t w + d_2\,\partial_x w^2 + \partial_x^3 w=0\,.
\end{align*}
For $\eps>0$ we start with the ansatz
\begin{align}
\label{eqn.def.corrector}
W_\eps = W_0+\eps^2 V_\eps \in\fspaceL^2_\even\at\Rset
\end{align}
and derive in \S\ref{sect:proof} the fixed point equation
\begin{align}
\label{Eqn:FixedPoint}
V_\eps = \calF_\eps\ato{V_\eps}
\end{align}
for the corrector $V_\eps$, where the operator $\calF_\eps$ is introduced in \eqref{Thm:FixedPoints.Eqn1}. The definition of $\calF_\eps$ requires to invert a linear operator $\calL_\eps$, which is defined in \eqref{Eqn:DefLandM} and admits a singular limit as $\eps\to0$. The linear leading order operator $\calL_0$ stems from the linearization of \eqref{Eqn:WaveODE} around the KDV wave $W_0$ and can be inverted on the space $\fspaceL^2_\even\at\Rset$ but not on $\fspaceL^2\at\Rset$ due to the shift invariance of the problem. The first technical issue in our perturbative existence proof is to show that this invertibility property persists for small $\eps>0$, see Theorem \ref{Lem:InvertibilityOfLeps}. The second one is to guarantee that $\calF_\eps$ is contractive on some ball in $\fspaceL^2_\even\at\Rset$, see Theorem \ref{Thm:FixedPoints}. Our main findings are illustrated in Figure~\ref{Fig0} and can be summarized as follows, see also Corollary~\ref{Cor:Summary}.
\begin{result*} 
For any sufficiently small $\eps>0$ there exists a unique even and nonnegative solution $W_\eps$ to the rescaled traveling wave equation \eqref{eq:scaledfpu1} with \eqref{eq:speedscaling} such that
\begin{align*}
\norm{W_\eps-W_0}_2+\norm{W_\eps-W_0}_\infty\leq C \eps^2
\end{align*}
holds for some constant $C$ independent of $\eps$, where $W_0$ is the unique even solution to \eqref{Eqn:WaveODE}.
\end{result*}
The asymptotic analysis presented below can -- for the price of more notational and technical effort -- be applied to a wider class of chains. Specifically, we expect that the following generalizations are feasible:
\begin{enumerate}
\item
We can allow for $M=\infty$ provided that the coefficients
$\al_m$, $\be_m$ and $\ga_m$ decay sufficiently fast with respect to $m$ (say, exponentially).
\item 
Some of the coefficients $\alpha_m$ and $\be_m$ might even be negative. In this case, however, one has to ensure that the contributions from the negative coefficients are compensated by those from the positive ones. A first natural condition is
\begin{align*}
\sum_{m=1}^M \al_m m^2>0
\end{align*}
which ensures that uniform states are stable under small amplitude perturbations and that the sound speed $c_0$ from \eqref{eq:speedscaling} is positive. A further minimal requirement is
\begin{align*}
\sum_{m=1}^M \al_m m^4>0\,,\qquad
\sum_{m=1}^M \be_m m^3\neq 0
\end{align*}
because otherwise the leading order problem -- see  \eqref{Eqn:WaveODE} and \eqref{Eqn:LeadingOrderProblem.2x} below -- degenerates and does not admit exponentially decaying homoclinic orbits.
\item 
The non-quadratic contributions to the forces might be less regular in the sense of
\begin{align*}
\abs{\Psi^{\prime\prime}\at{r}}\leq \ga_m \abs{r}^{1+\ka_m}
\end{align*}
for some constants $\ga_m$ and exponents $0<\kappa_m<1$.
\end{enumerate}
The paper is organized as follows: In \S\ref{sect:prelim} we introduce a family of convolution operators and reformulate \eqref{eq:scaledfpu1} as an eigenvalue problem for $W_\eps$. Afterwards we provide singular asymptotic expansions for a linear auxiliary operator $\calB_\eps$, which is defined in \eqref{Eqn:OperatorBeps} and plays a prominent role in our method. \S\ref{sect:proof} is devoted to the proof of the existence theorem. We first study the leading order problem in \S\ref{sect:proof.1} and show afterwards in \S\ref{sect:proof.2} that the linear operator $\calL_\eps$ is invertible. In \S\ref{sect:proof.3} we finally employ Banach's contraction mapping principle to construct solutions $V_\eps$ to the nonlinear fixed problem \eqref{Eqn:FixedPoint} and conclude with a brief outlook.  A list of all important symbols is given in the appendix.
%
\section{Preliminaries and linear integral operators}\label{sect:prelim}
%
%
In this section we reformulate the nonlinear advance-delay-differential equation \eqref{eq:scaledfpu1} as an integral equation and provide asymptotic estimates for the arising linear integral operators.
%
\subsection{Reformulation in terms of convolution operators}
%
%
For any $\eta>0$, we define the operator $\calA_{\eta}$ by
\begin{align} 
\label{eq:intoperator}
\at{\calA_{\eta}Y}\at{x}:=\frac{1}{\eta}\int_{x-\eta/2}^{x+\eta/2}Y\at{\xi}\dint\xi
\end{align}
and regard \eqref{eq:scaledfpu1} as an equation for the rescaled velocity profile \eqref{Eqn:Def.Vel.prof}. Notice that $\calA_\eta$ can be viewed as the convolution with the rescaled indicator function of the interval $\ccinterval{-\eta/2}{+\eta/2}$. 
\begin{lemma}[reformulation as nonlinear eigenvalue problem]
Suppose that $W_\eps=U^\prime_\eps$ belongs to $\fspaceL^2\at\Rset$. Then, the nonlinear eigenvalue problem
\begin{align} 
\label{eq:scaledfpu1a}%
\eps^2 c_\eps^2 W_\eps  = \sum_{m=1}^M m \calA_{m\eps}\Phi_m^\prime\at{m\eps^2 \calA_{m\eps} W_\eps}
\end{align}
is equivalent to the traveling wave equation \eqref{eq:scaledfpu1}. 
\end{lemma}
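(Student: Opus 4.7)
The plan is to show that the two equations differ by exactly one $x$-integration (respectively, differentiation), and that under this operation the discrete difference operators $\nabla_{\pm m\eps}$ turn into the averaging operator $\calA_{m\eps}$ modulo a half-step shift $\pm m\eps/2$. The whole lemma will then reduce to verifying that these shifts cancel between the inner and the outer copy of $\calA_{m\eps}$ appearing in \eqref{eq:scaledfpu1a}.

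I would first establish two elementary identities. The \emph{forward identity}, obtained from the fundamental theorem of calculus applied to $W_\eps=U_\eps^\prime$, reads
\begin{align*}
\at{\nabla_{+m\eps}U_\eps}\at{x}
=\frac{1}{m\eps}\int_{x}^{x+m\eps}W_\eps\at{\xi}\dint\xi
=\at{\calA_{m\eps}W_\eps}\at{x+m\eps/2}\,.
\end{align*}
The \emph{dual identity} for any locally integrable $f$ which decays at $-\infty$ is
\begin{align*}
\int_{-\infty}^{x}\at{\nabla_{-m\eps}f}\at{\xi}\dint\xi
=\frac{1}{m\eps}\int_{x-m\eps}^{x}f\at{\xi}\dint\xi
=\at{\calA_{m\eps}f}\at{x-m\eps/2}\,,
\end{align*}
proved by a telescoping change of variables in the left-hand integral.

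For the direction \eqref{eq:scaledfpu1}${}\Rightarrow{}$\eqref{eq:scaledfpu1a} (with $r_*=0$ as standing assumption), I would integrate \eqref{eq:scaledfpu1} from $-\infty$ to $x$. The boundary term at $-\infty$ vanishes because $W_\eps=U_\eps^\prime\in\fspaceL^2\at\Rset$ is continuous once the right-hand side of the resulting integral equation is recognised as locally bounded (so that $W_\eps\in\fspaceH^1\at\Rset$ after bootstrapping, hence $W_\eps\to 0$ at infinity). The dual identity applied to $f=\Phi_m^\prime\at{m\eps^2\nabla_{+m\eps}U_\eps}$ then turns the right-hand side into
\begin{align*}
\eps\sum_{m=1}^M m\,\at{\calA_{m\eps}\Phi_m^\prime\at{m\eps^2\nabla_{+m\eps}U_\eps}}\at{x-m\eps/2}\,,
\end{align*}
and inserting the forward identity inside $\Phi_m^\prime$ replaces the argument by $m\eps^2\,\calA_{m\eps}W_\eps$ evaluated at a point shifted by $+m\eps/2$. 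A single substitution $\eta=\xi+m\eps/2$ in the definition of the outer $\calA_{m\eps}$ then cancels the outer shift $-m\eps/2$ against this inner shift $+m\eps/2$, which yields \eqref{eq:scaledfpu1a} after dividing by $\eps$.

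The converse direction is the exact reversal: differentiating \eqref{eq:scaledfpu1a} and using $\tfrac{\mathrm{d}}{\mathrm{d}x}\calA_{m\eps}f=\at{\nabla_{-m\eps}f}\at{\cdot+m\eps/2}$ together with the same forward identity recovers \eqref{eq:scaledfpu1}. I do not expect any serious obstacle; the only point that needs care is the vanishing of the integration constant, which is a matter of using $W_\eps\in\fspaceL^2\at\Rset$ plus the regularity gained from the equation itself. Everything else is bookkeeping of the half-step shifts.
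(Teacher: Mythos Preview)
Your proposal is correct and follows essentially the same route as the paper: both rely on the identity $(\nabla_{\pm m\eps}U_\eps)(x)=(\calA_{m\eps}W_\eps)(x\pm m\eps/2)$, obtain one direction by differentiating \eqref{eq:scaledfpu1a} and the other by integrating \eqref{eq:scaledfpu1}, and then cancel the two half-shifts. The only minor difference is in how the integration constant is disposed of: the paper observes directly that $W_\eps\in\fspaceL^2\at\Rset$ implies $\calA_{m\eps}W_\eps\in\fspaceW^{1,2}\at\Rset$ and hence $\calA_{m\eps}W_\eps\at{x}\to0$ at infinity, which forces the constant to vanish without any bootstrapping on $W_\eps$ itself; your $H^1$-bootstrap works too but is a small detour.
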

\begin{proof}
The operators defined in \eqref{eq:diffoperators} and \eqref{eq:intoperator} satisfy
\begin{align}
\label{eq:scaledfpu1a.px1}
\at{\nabla_{\pm m\eps}U_\eps}\at{x}= \bat{
\calA_{m\eps}U_\eps}^\prime\at{x\pm\tfrac12m\eps}=\bat{
\calA_{m\eps}W_\eps}\at{x\pm\tfrac12m\eps}\,,
\end{align}
so \eqref{eq:scaledfpu1} follows from \eqref{eq:scaledfpu1a} after differentiation with respect to $x$ and defining $U_\eps$ as the primitive of $W_\eps$. In order to derive \eqref{eq:scaledfpu1a} from \eqref{eq:scaledfpu1}, we first notice that $W_\eps\in\fspaceL^2\at\Rset$ implies $\calA_{m\eps}W_\eps\in\fspaceW^{1,2}\at\Rset$ (cf. Corollary \ref{Cor:RegularityOperatorA} below) and hence  $\at{\calA_{m\eps}W_\eps}\at{x}$ to $0$ as $x\to\pm\infty$. Afterwards we integrate \eqref{eq:scaledfpu1} with respect to $x$ and eliminate the constant of integration by means of the decay condition at infinity.
\end{proof}
In the case $M=1$, we can derive from \eqref{eq:scaledfpu1a} the identity
\begin{align*}
\eps^2 c_\eps^2 \calA_\eps W_\eps  =  \calA_{\eps}^2\Phi_1^\prime\at{\eps^2 \calA_{\eps}  W_\eps}\,,
\end{align*}
which is the equation for the distance profile $\calA_{\eps}W_\eps$ and has been studied in \cite{FP99} (see equation (2.7) there for the function $\phi=\calA_\eps W_\eps$). For $M>1$, however, we have to work with the velocity profile $W_\eps$ since for a general function $W$ it is not possible to express $\calA_{m\eps}W$ for $m>1$ in terms of $\calA_\eps W$.
\bigpar
We next summarize important properties of the convolution operators defined in \eqref{eq:intoperator}.
\begin{lemma}[properties of $\calA_\eta$]
\label{Lem:PropertiesOperatorA}
For each $\eta>0$, the integral operator $\calA_\eta$ has the following properties:
\begin{enumerate}
\item 
For any $W\in\fspaceL^2\at\Rset$, we have 
$\calA_\eta W\in\fspaceL^2\cap\fspaceL^\infty\at\Rset$ with
\begin{align}
\label{Lem:PropertiesOperatorA.Eqn1}
\norm{\calA_\eta W}_\infty \leq \eta^{-1/2}\norm{W}_2\,,\qquad  \norm{\calA_\eta W}_2 \leq \norm{W}_2\,.
\end{align}
Moreover, $\calA_\eta W$ admits a weak derivative with $\norm{\at{\calA_\eta W}^\prime}_2\leq 2\eta^{-1}\norm{W}_2$.
\item 
For any $W\in\fspaceL^\infty\at\Rset$, we have $\norm{\calA_\eta W}_\infty\leq \norm{W}_\infty$. 
\item 
$\calA_\eta$ respects the even-odd parity, the nonnegativity, and the unimodality of functions. The latter means monotonicity for both negative and positive arguments.
\item 
$\calA_\eta$ diagonalizes in Fourier space and corresponds to
the symbol function
\begin{align}
\label{Eqn:SymbolFct}
a_\eta\at{k}=\sinc\at{\eta k/2}
\end{align}  
with $\sinc\at{z}:=\sin\at{z}/z$.
\item 
$\calA_\eta$ is self-adjoint in the $\fspaceL^2$-sense.
\end{enumerate}
\end{lemma}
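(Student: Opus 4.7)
The unifying observation is that $\calA_\eta$ is the convolution operator with kernel $\chi_\eta(x):=\eta^{-1}\mathbf{1}_{\ccinterval{-\eta/2}{\eta/2}}(x)$, which is even, nonnegative, compactly supported, unimodal, and normalized by $\norm{\chi_\eta}_1=1$. From this single structural fact most of the assertions reduce to standard convolution results, and the plan is simply to invoke them in the correct order.

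For the estimates in part (i), the $\fspaceL^\infty$-bound follows from Cauchy--Schwarz applied directly to the defining integral for $\at{\calA_\eta W}\at{x}$, whereas the $\fspaceL^2$-bound is Young's convolution inequality $\norm{\chi_\eta\ast W}_2\leq\norm{\chi_\eta}_1\norm{W}_2$. For the weak derivative I would verify on smooth compactly supported $W$ the pointwise identity $\bat{\calA_\eta W}^\prime\at{x}=\eta^{-1}\bat{W\at{x+\eta/2}-W\at{x-\eta/2}}$ via the fundamental theorem of calculus and extend by density; the factor two in the claimed bound then drops out of the triangle inequality combined with translation invariance of $\norm{\cdot}_2$. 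Part (ii) reduces to a one-line pointwise estimate using $\int\chi_\eta=1$.

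For part (iii), parity is preserved by the substitution $\xi\mapsto-\xi$ in the defining integral (using evenness of $\chi_\eta$), and nonnegativity follows from positivity of the integrand. For the preservation of unimodality -- read in the natural setting of functions peaked at the origin -- I would rely on the derivative formula from (i): for $W$ even and nonincreasing on $\cointerval{0}{\infty}$ and for $x\geq 0$, evenness gives $\abs{x-\eta/2}\leq\abs{x+\eta/2}$, whence monotonicity yields $W\at{x-\eta/2}\geq W\at{x+\eta/2}$ and thus $\bat{\calA_\eta W}^\prime\at{x}\leq 0$; the symmetric argument handles $x\leq 0$.

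Finally, parts (iv) and (v) are pure Fourier bookkeeping. A direct computation gives $\widehat{\chi_\eta}\at{k}=\sinc\at{\eta k/2}$, and the convolution theorem translates this into the asserted symbol identity for $\calA_\eta$. Self-adjointness then follows either from the realness of $a_\eta$ or, equivalently, from the evenness of the convolution kernel. No step of this plan presents a genuine obstacle; the only points requiring mild care are the rigorous justification of the distributional derivative identity and the convention under which preservation of unimodality is to be understood (for which evenness of $W$ is the cleanest setting and matches how the lemma is used later in the paper).
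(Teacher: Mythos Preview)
Your proposal is correct and in fact considerably more detailed than the paper's own proof, which simply states that all assertions follow immediately from the definition of $\calA_\eta$ and refers to \cite{Her10} for details. Your arguments via the convolution representation $\calA_\eta W=\chi_\eta\ast W$ are exactly the standard ones one would expect behind such a citation: Cauchy--Schwarz for the $\fspaceL^\infty$-bound, Young's inequality for the $\fspaceL^2$-bound, the difference-quotient identity for the weak derivative, the explicit computation of $\widehat{\chi_\eta}$ for the symbol, and evenness of the kernel for self-adjointness.

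The one point you rightly flag deserves emphasis: your unimodality argument genuinely requires $W$ to be even, since for $0<x<\eta/2$ the comparison $W\at{x-\eta/2}\geq W\at{x+\eta/2}$ does not follow from monotonicity on each half-line alone. The lemma as stated does not make this hypothesis explicit, but in every application within the paper (notably Corollary~\ref{Cor:InvarianceProperties} and the discussion of $W_0$) the relevant functions are even, so your reading is the intended one. This is not a gap in your proof but rather a mild imprecision in the statement that you have correctly identified and handled.
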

\begin{proof}
All assertions follow immediately from the definition of $\calA_\eta$; see \cite{Her10} for the details.
\end{proof}
\begin{corollary}[regularity of $\calA_\eta W$]
\label{Cor:RegularityOperatorA} 
$W\in\fspaceL^2\at\Rset$ implies 
$\calA_\eta W\in\fspace{W}^{1,2}\at{\Rset}\subset\fspace{BC}\at{\Rset}$ and hence
$\at{\calA_\eta W}\at{x}\to0$ as $x\to\pm\infty$.
\end{corollary}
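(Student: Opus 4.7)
The plan is to read off $\calA_\eta W\in\fspace{W}^{1,2}\at\Rset$ directly from Lemma~\ref{Lem:PropertiesOperatorA}, and then to establish the embedding $\fspace{W}^{1,2}\at\Rset\subset\fspace{BC}\at\Rset$ together with the decay at infinity by elementary convolution arguments. Concretely, Lemma~\ref{Lem:PropertiesOperatorA}(1) already provides $\calA_\eta W\in\fspaceL^2\at\Rset$ together with a weak derivative in $\fspaceL^2\at\Rset$, so the Sobolev membership is automatic and only the embedding into $\fspace{BC}\at\Rset$ together with the vanishing at $\pm\infty$ remain to be checked.

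For these two properties I would exploit the convolution representation
\begin{equation*}
\at{\calA_\eta W}\at{x}=\frac{1}{\eta}\int_{x-\eta/2}^{x+\eta/2} W\at{\xi}\dint\xi,
\end{equation*}
and apply Cauchy--Schwarz on the interval $\ccinterval{x-\eta/2}{x+\eta/2}$ to obtain the pointwise bound
\begin{equation*}
\babs{\at{\calA_\eta W}\at{x}}\leq \eta^{-1/2}\Bat{\int_{x-\eta/2}^{x+\eta/2}\abs{W\at{\xi}}^2\dint\xi}^{1/2}.
\end{equation*}
Boundedness is then just \eqref{Lem:PropertiesOperatorA.Eqn1}, while the decay as $x\to\pm\infty$ follows because the right hand side is the square root of a tail integral of the integrable function $\abs{W}^2$, which tends to zero by absolute continuity of the Lebesgue integral.

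For continuity I would note that translations act continuously on $\fspaceL^2\at\Rset$, i.e. $\norm{W\at{\cdot-h}-W}_2\to0$ as $h\to0$, and combine this with the same Cauchy--Schwarz estimate applied to $W\at{\cdot-h}-W$ in place of $W$; this yields uniform continuity of $\calA_\eta W$ and in particular continuity, completing the inclusion $\calA_\eta W\in\fspace{BC}\at\Rset$. No step is really an obstacle here: the only mildly delicate point is to phrase the decay argument so that it relies purely on the tail-vanishing of $\int\abs{W}^2$ rather than on an invocation of the abstract Sobolev embedding, which keeps the corollary self-contained.
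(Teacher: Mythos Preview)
Your proposal is correct. The paper itself gives no proof for this corollary at all: it is stated immediately after Lemma~\ref{Lem:PropertiesOperatorA} as a direct consequence, the implicit reasoning being that part~(1) of that lemma yields $\calA_\eta W\in\fspaceW^{1,2}\at\Rset$, and then the standard one-dimensional Sobolev embedding $\fspaceW^{1,2}\at\Rset\hookrightarrow\fspace{C}_0\at\Rset\subset\fspace{BC}\at\Rset$ supplies continuity, boundedness, and decay at infinity in one stroke.

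Your route differs in that you deliberately avoid invoking the abstract embedding and instead extract boundedness, continuity, and decay directly from the convolution structure of $\calA_\eta$ via Cauchy--Schwarz and the $\fspaceL^2$-continuity of translations. This is more elementary and self-contained, and has the minor advantage that it makes the constants and the mechanism of decay explicit (the tail integral of $\abs{W}^2$). The paper's implicit approach is shorter but relies on a black-box embedding; yours trades brevity for transparency. Either is perfectly adequate here.
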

%
%
\subsection{Asymptotic analysis for  the convolution operators \texorpdfstring{$\calA_\eta$}{}}
%
The symbol function $a_\eta$ from \eqref{Eqn:SymbolFct} is analytic with respect to $z=\eta k/2$ and in view of 
\begin{align*}
\sinc\at{z}=\sum_{j=0}^\infty \frac{\at{-1}^j z^{2j}}{\at{2j+1}!}
\end{align*}
we readily verify
\begin{align*}
\calA_{\eta} \mhexp{\iu k{x}}=\sinc\at{\eta k/2} \mhexp{\iu k{x}}=
\sum_{j=0}^{\infty} (-1)^j \frac{\eta^{2j}k^{2j}\mhexp{\iu k{x}}}{2^{2j}\at{2j+1}!}=
\sum_{j=0}^\infty\frac{\eta^{2j}\partial_{x}^{2j}\mhexp{\iu k{x}}}{2^{2j}\at{2j+1}!}\,.
\end{align*}
The integral operator \eqref{eq:intoperator} therefore admits the \emph{formal} expansion
\begin{align} 
\label{Eqn:AsymptoticsA}
\calA_\eta=\sum_{j=0}^\infty\frac{\eta^{2j}\partial_{x}^{2j}}{2^{2j}\at{2j+1}!}\qquad
\text{and hence}\qquad
\calA_{m\eps}=\id +\eps^2\frac{m^2}{24}\partial_{x}^2+\DO{\eps^4}\,,
\end{align}
which reveals that $\calA_{m\eps}$ should be regarded as a \emph{singular perturbation} of the identity operator $id$. This singular nature complicates the analysis because the error terms in \eqref{Eqn:AsymptoticsA} can only be bounded in terms of higher derivatives.
\par%
One key observation for dealing with the limit $\eps\to0$ is -- roughly speaking -- that the resolvent-type operator
\begin{align}
\notag
\at{\id + \ka \frac{\id - \calA_{m\eps}^2}{\eps^2}}^{-1}
\end{align}
is well-defined and almost compact as long as $\ka>0$. It thus exhibits nice regularizing properties which allows us 
to compensate bad terms stemming from the expansion \eqref{Eqn:AsymptoticsA}. The same idea has been employed in \cite{FP99} in the context of the distance profile $\calA_\eps W$, showing  that 
the Yosida-type operator
\begin{align*}
\at{\id +\ka \frac{\id - \calA_{\eps}^2}{\eps^2}}^{-1} \calA_{\eps}^2
\end{align*}
behaves nicely since the corresponding Fourier symbol
\begin{align*}
\frac{\eps^2 a_\eps^2\at{k}}{\eps^2+\ka\at{1-a_\eps^2\at{k}}}
\end{align*}
is well-defined and bounded by  $C/\at{1+\eps^2k^2}$, cf. \cite[Corollary 3.4.]{FP99}. Before we establish a related but weaker result in next subsection, we derive explicit error bounds for the singular expansion of $\calA_{m\eps}$.
\begin{figure}[t!]%
\centering{%
\includegraphics[width=0.8\textwidth]{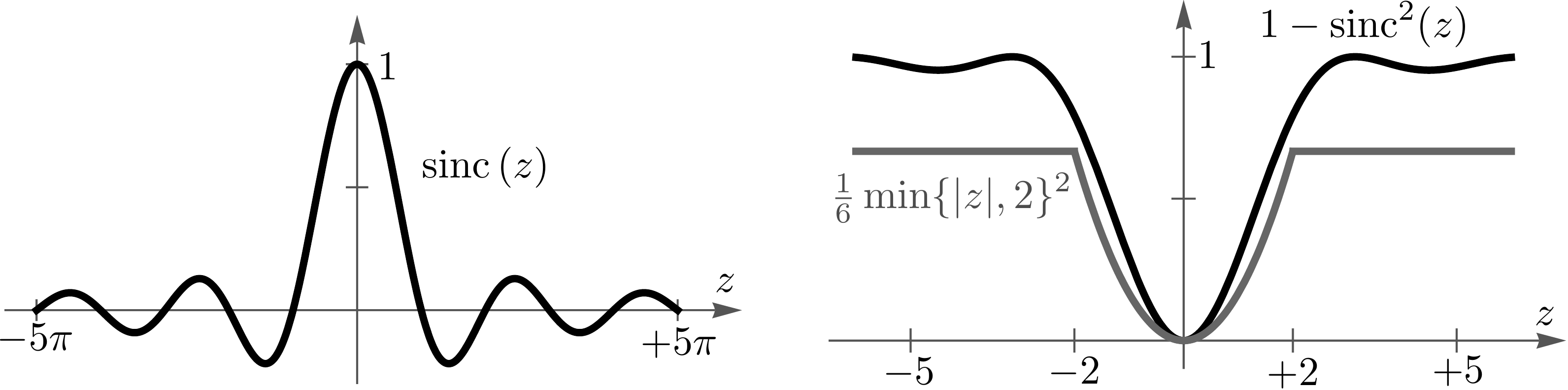}%
}%
\caption{ %
\emph{Left panel:} Graph of the $\sinc$ function
$z\mapsto \sin\at{z}/z$. \emph{Right panel} Lower bound for $1-\sinc^2$ as used in the proof of Lemma \ref{Lem:InversOfB}.
}%
\label{Fig1}%
\end{figure}%
\begin{lemma}[small-parameter asymptotics of $\calA_{\eta}$ ]
\label{Lem:LimitOperatorA}
There exists a constant $C$, which does not depend on $\eta$,
such that the estimates
\begin{align}
\label{Lem:LimitOperatorA.EqnA}
\norm{\calA_{\eta}W-W}_2 \leq C\eta^2 \norm{W^{\prime\prime}}_2\,,\qquad
\norm{\calA_{\eta}W-W}_\infty\leq C\eta^2 \norm{W^{\prime\prime}}_\infty
\end{align}
and
\begin{align}
\label{Lem:LimitOperatorA.EqnB}
\norm{\calA_{\eta}W-W-\frac{\eta^2}{24}W^{\prime\prime}}_2 \leq C\eta^4 \norm{W^{\prime\prime\prime\prime}}_2\,,\qquad
\norm{\calA_{\eta}W-W-\frac{\eta^2}{24}
W^{\prime\prime}}_\infty\leq C\eta^4 \norm{W^{\prime\prime\prime\prime}}_\infty
\end{align}
hold for any sufficiently regular $W$. In particular, we have
\begin{align}
\label{Lem:LimitOperatorA.Eqn0}
\calA_{\eta} W\quad\xrightarrow{\;\eta\to0\;}\quad W\qquad\text{strongly}\quad\text{in}\quad \fspaceL^2\at\Rset
\end{align} 
for any $W\in\fspaceL^2\at\Rset$.
\end{lemma}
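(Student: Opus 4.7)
The plan is to exploit the elementary substitution $\xi=x+\eta s$ in \eqref{eq:intoperator}, which rewrites the operator as
\begin{align*}
\at{\calA_\eta W}\at{x}=\int_{-1/2}^{1/2}W\at{x+\eta s}\dint{s}.
\end{align*}
Since the integration measure is symmetric about $s=0$, Taylor-expanding $W\at{x+\eta s}$ around $x$ and integrating over $s$ causes every odd-order term to vanish; the even moments $\int_{-1/2}^{1/2}\at{\eta s}^{2j}\dint{s}=\eta^{2j}/\bat{\at{2j+1}\,2^{2j}}$ then reproduce the formal expansion \eqref{Eqn:AsymptoticsA}. The two quantitative estimates \eqref{Lem:LimitOperatorA.EqnA} and \eqref{Lem:LimitOperatorA.EqnB} will be derived by controlling the corresponding Taylor remainders in integral form.

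More concretely, for \eqref{Lem:LimitOperatorA.EqnA} I would use the identity $W\at{x+h}-W\at{x}-h\,W^\prime\at{x}=\int_0^h\at{h-t}W^{\prime\prime}\at{x+t}\dint{t}$ (valid for both signs of $h$ when read as a signed integral), together with $\int_{-1/2}^{1/2}\eta s\,\dint{s}=0$, to write
\begin{align*}
\at{\calA_\eta W - W}\at{x}=\int_{-1/2}^{1/2}\int_{0}^{\eta s}\at{\eta s - t}W^{\prime\prime}\at{x+t}\dint{t}\dint{s}.
\end{align*}
Pulling the absolute value under the integrals yields the pointwise bound $\babs{\at{\calA_\eta W-W}\at{x}}\leq\tfrac{\eta^2}{24}\norm{W^{\prime\prime}}_\infty$, which is the $\fspaceL^\infty$ part of \eqref{Lem:LimitOperatorA.EqnA}. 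For the $\fspaceL^2$ part, I would apply the Minkowski integral inequality in both $s$ and $t$ together with the translation invariance of $\norm{\cdot}_2$, arriving again at the constant $1/24$. The estimate \eqref{Lem:LimitOperatorA.EqnB} follows by exactly the same scheme with a fourth-order Taylor expansion: the first- and third-order terms vanish by symmetry, the quadratic contribution reproduces the leading correction $\tfrac{\eta^2}{24}W^{\prime\prime}$, and the integral remainder involving $W^{\prime\prime\prime\prime}$ produces the $\eta^4$ bound in both norms.

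For the strong convergence \eqref{Lem:LimitOperatorA.Eqn0} on the full space $\fspaceL^2\at{\Rset}$, I would invoke a density argument. Given $\delta>0$, approximate $W$ by some smooth, compactly supported $V$ with $\norm{W-V}_2<\delta$. Combining the contraction bound $\norm{\calA_\eta W}_2\leq\norm{W}_2$ from Lemma~\ref{Lem:PropertiesOperatorA} with \eqref{Lem:LimitOperatorA.EqnA} applied to $V$ then gives
\begin{align*}
\norm{\calA_\eta W-W}_2\leq 2\norm{W-V}_2+\norm{\calA_\eta V-V}_2\leq 2\delta+C\eta^2\norm{V^{\prime\prime}}_2,
\end{align*}
so that $\limsup_{\eta\to 0}\norm{\calA_\eta W-W}_2\leq 2\delta$, and the claim follows as $\delta>0$ is arbitrary.

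I do not expect any serious obstacle; the only mildly technical point is to track the sign of $\eta s$ in the inner $t$-integral, but a straightforward case distinction for $s\gtrless 0$ shows that $\int_0^{|\eta s|}\babs{\eta s-t}\dint{t}=\at{\eta s}^2/2$ (and similarly at the fourth order), so the resulting bounds are symmetric in $s$ as required. An alternative route would be to argue in Fourier space by using Plancherel together with the pointwise inequalities $\babs{\sinc\at{z}-1}\leq z^2/6$ and $\babs{\sinc\at{z}-1+z^2/6}\leq z^4/120$; however, this yields only the $\fspaceL^2$ bounds, so the real-space Taylor-remainder approach is preferable since it handles both norms uniformly.
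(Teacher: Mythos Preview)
Your proposal is correct. It differs from the paper's proof in two places. For the $\fspaceL^2$ estimates, the paper takes precisely the Fourier route you mention at the end as an alternative: it applies Parseval together with the pointwise symbol bounds $\abs{1-\sinc\at{z}}\leq z^2/6$ and $\babs{1-\sinc\at{z}-z^2/6}\leq Cz^4$, while handling the $\fspaceL^\infty$ estimates separately by the same Taylor-remainder argument you use. Your real-space Minkowski approach is more uniform in that it treats both norms by a single mechanism; the paper's Fourier computation, on the other hand, ties the $\fspaceL^2$ bounds directly to the symbol expansion \eqref{Eqn:AsymptoticsA}. For the strong convergence \eqref{Lem:LimitOperatorA.Eqn0}, the paper does not argue by density. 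Instead it first establishes weak convergence $\calA_\eta W\rightharpoonup W$ via the self-adjointness of $\calA_\eta$ and the estimate \eqref{Lem:LimitOperatorA.EqnA} applied to smooth test functions, and then upgrades to strong convergence by showing $\norm{\calA_\eta W}_2\to\norm{W}_2$, sandwiching the $\liminf$ (weak lower semicontinuity) against the $\limsup$ (contraction bound \eqref{Lem:PropertiesOperatorA.Eqn1}$_2$). Your density argument is shorter and more direct; the paper's Hilbert-space route avoids invoking density of smooth functions explicitly but is otherwise equivalent in strength.
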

\begin{proof} 
\emph{\ul{$\fspaceL^\infty$-estimates}}: 
For any $W\in\fspace{W}^{4,\infty}\at\Rset$, the weak variant of  Taylor's expansion theorem implies
\begin{align*}
\Babs{P\pair{x}{\xi}}\leq
\norm{W^{\prime\prime\prime\prime}}_\infty
\frac{\at{x-\xi}^4}{24}
\end{align*}
for almost all $x,\xi\in\Rset$, where
\begin{align*}
P\pair{x}{\xi}:=
W\at\xi-W\at{x}-W^\prime\at{x}\bat{x-\xi}-
\tfrac{1}{2}W^{\prime\prime}\at{x}\bat{x-\xi}^2-
\tfrac{1}{6}W^{\prime\prime\prime}\at{x}\bat{x-\xi}^3\,.
\end{align*}
Integrating $P\pair{x}{\xi}$ with respect to $\xi\in\ccinterval{x-\eta/2}{x+\eta/2}$ we therefore get
\begin{align*}
\abs{\eta \calA_\eta W\at{x}-\eta W\at{x}-\frac{\eta^3}{24}W^{\prime\prime}\at{x}}&=\abs{\int_{x-\eta/2}^{x+\eta/2}P_\eta\pair{x}{\xi}\dint\xi}\\&\leq
\frac{\norm{W^{\prime\prime\prime\prime}}_\infty}{24}\int_{x-\eta/2}^{x+\eta/2}\at{x-\xi}^4\dint\xi=C\norm{W^{\prime\prime\prime\prime}}_\infty\eta^5\,,
\end{align*}
and \eqref{Lem:LimitOperatorA.EqnB}$_2$ follows immediately. The derivation of \eqref{Lem:LimitOperatorA.EqnA}$_2$ is similar.
\par
\emph{\ul{$\fspaceL^2$-estimates}}:
Now let $W\in\fspaceW^{4,2}\at\Rset$ be arbitrary.
By Parseval's Theorem -- and employing that
$\abs{1-\sinc\at{z}- z^2/6}\leq C z^4$ holds for some constant $C$ and all $z\in\Rset$ -- we find
\begin{align*}
\norm{\calA_\eta W -W-\frac{\eta^2}{24}W^{\prime\prime}}_2^2 &=
\norm{\widehat{W}-\widehat{\calA_\eta W}+\frac{\eta^2}{24}\widehat{W^{\prime\prime}}}_2^2 
\\&=\int_\Rset \at{1-\sinc\at{\eta k/2}-\frac{\eta^2k^2}{24}}^2\abs{\widehat{W}\at{k}}^2\dint{k}
\\&\leq  C\eta^8\int_\Rset \abs{k^4\widehat{W}\at{k}}^2\dint{k}=C\eta^8\norm{W^{\prime\prime\prime\prime}}_2^2\,,
\end{align*} 
and this implies \eqref{Lem:LimitOperatorA.EqnB}$_1$. The estimate \eqref{Lem:LimitOperatorA.EqnA}$_1$
can by proven analogously since we have $\abs{1-\sinc\at{z}}\leq z^2/6$ for all $z\in\Rset$. 
\par
\emph{\ul{Final argument}}: 
Let $W\in\fspaceL^2\at\Rset$ be arbitrary but fixed. Since $\calA_\eta$ is self-adjoint, see Lemma \ref{Lem:PropertiesOperatorA}, and in view of \eqref{Lem:LimitOperatorA.EqnA}  we readily demonstrate
\begin{align}
\label{Lem:LimitOperatorA.PEqn0}
\calA_{\eta} W\quad\xrightarrow{\;\eta\to0\;}\quad W\qquad \text{weakly in}\quad \fspaceL^2\at\Rset\,,
\end{align}
and this implies $\norm{W}_2\leq \liminf_{\eta\to0}\norm{\calA_\eta W}_2$. On the other hand, the estimate \eqref{Lem:PropertiesOperatorA.Eqn1}$_2$ ensures that $\limsup_{\eta\to0}\norm{\calA_\eta W}_2\leq\norm{W}_2$.
We therefore have $\norm{W}_2=\lim_{\eta\to0}\norm{\calA_\eta W}_2$ and combining this with the weak convergence 
\eqref{Lem:LimitOperatorA.PEqn0} we arrive at \eqref{Lem:LimitOperatorA.Eqn0} since $\fspaceL^2\at\Rset$ is a Hilbert space.
\end{proof}
%
%
\subsection{Asymptotic properties of the auxiliary operator \texorpdfstring{$\calB_\eps$}{}}
%
As already outlined above, we introduce for any given $\eps>0$ the operator
\begin{align}
\label{Eqn:OperatorBeps}
\calB_\eps:=  \id + \sum_{m=1}^M\al_m m^2 \frac{\id-\calA_{m\eps}^2}{\eps^2}\,,
\end{align}
which appears in \eqref{eq:scaledfpu1a} if we collect all linear terms on the left hand side, insert the wave-speed scaling \eqref{eq:speedscaling}, and divide the equation by $\eps^4$. We further define the operator
\begin{align}
\label{Eqn:OperatorB0}
\calB_0:=\id - \frac{\sum_{m=1}^M \al_m m^4}{12}\,\partial_x^2\,,
\end{align}
which can -- thanks to Lemma \ref{Lem:LimitOperatorA} -- be regarded as the formal limit of $\calB_\eps$ as $\eps\to0$. In Fourier space, these operators correspond to the symbol functions
\begin{align}
\label{Eqn:OperatorBSymb}
b_\eps\at{k} =1+  \sum_{m=1}^M\al_m m^2 \frac{1-\sinc^2\at{mk\eps/2}}{\eps^2}\,,\qquad
b_0\at{k} =1+  \frac{\sum_{m=1}^M \al_m m^4}{12}k^2\,,
\end{align}
which are illustrated in Figure~\ref{Fig2} and satisfy
\begin{align*}
b_\eps\at{k}\quad \xrightarrow{\eps\to0}\quad b_0\at{k}
\end{align*}
for any fixed $k\in\Rset$. This convergence, however, does not hold uniformly in $k$ since $\calB_\eps$ is a singular perturbation of $\calB_0$. Using the uniform positivity of these symbol functions, we easily demonstrate the existence of the inverse operators
\begin{align*}
\calB_\eps^{-1},\,\calB_0^{-1} \;:\;\fspaceL^{2}\at\Rset\to\fspaceL^{2}\at\Rset\,,
\end{align*}
where $\calB_0^{-1}$ maps actually into the Sobolev space $\fspace{W}^{2,2}\at\Rset$ and is hence smoothing because $1/b_0\at{k}$ decays quadratically at infinity. The inverse of $\calB_\eps$, however, is less regularizing because  $b_\eps\at{k}$ remains bounded as $k\to\pm\infty$. In order to obtain asymptotic estimates for $\calB_\eps^{-1}$, we introduce the cut-off operator 
\begin{align*}
\Pi_\eps\;:\;\fspaceL^2\at\Rset\to\fspaceL^2\at\Rset
\end{align*}
by defining its symbol function $\pi_\eps$ as follows
\begin{align}
\label{eqn.cutoff.def}
\pi_\eps \at{k}:=\left\{\begin{array}{lcl}
1&&\text{for \;$\abs{k}\leq\D \frac{4}{\eps}$}\,,\\0&&\text{else}\,.
\end{array}\right.
\end{align} 
One of our key technical results is the following characterization of $\calB_\eps^{-1}$, which reveals that $\calB_\eps$ admits an almost compact inverse. For $m=1$, a similar but slightly stronger result has been given in \cite[Corollary 3.5]{FP99} using a careful Fourier-pole analysis of the involved integral operators. For $m>1$, however, the symbol functions possess more poles in the complex plane and hence we argue differently.
\begin{figure}[t!]%
\centering{%
\includegraphics[width=0.8\textwidth]{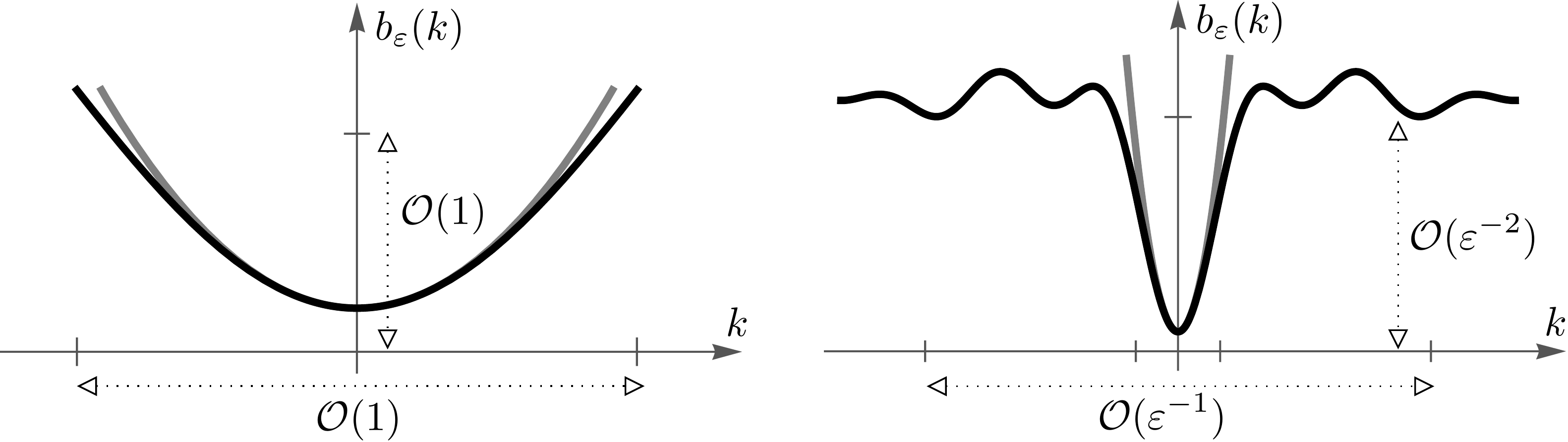}%
}%
\caption{Sketch of the symbol function $b_\eps$ from \eqref{Eqn:OperatorBSymb}, depicted on two intervals for $\eps>0$ (black) and $\eps=0$ (gray). Notice that $b_\eps\at{0}=\min_{k\in\Rset} b_\eps\at{k}=1$ holds for all $\eps\geq0$.
}%
\label{Fig2}%
\end{figure}%
\begin{lemma}[asymptotic estimates for $\calB_\eps^{-1}$]
\label{Lem:InversOfB}
For any $\eps>0$, the operator $\calB_\eps$ respects the even-odd parity and is both
self-adjoint and invertible on $\fspaceL^2\at\Rset$. Moreover, there exists a constant 
$C$ such that 
\begin{align}
\label{Lem:InversOfB.Eqn1} 
\norm{\Pi_\eps \calB_\eps^{-1} G}_{2,2}+\eps^{-2}\norm{\at{\id-\Pi_\eps}\calB_\eps^{-1} G}_{2}\leq C \norm{G}_{2}
\end{align} 
holds for all $G\in\fspaceL^2\at\Rset$ and all $0<\eps\leq1$. Here,
$\norm{\cdot}_{2,2}$ denotes the usual norm in  $\fspaceW^{2,2}\at\Rset$.
\end{lemma}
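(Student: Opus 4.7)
The operator $\calB_\eps$ is a Fourier multiplier with symbol $b_\eps$ from \eqref{Eqn:OperatorBSymb}, which is real, even and continuous. Self-adjointness on $\fspaceL^2\at\Rset$ and the preservation of even-odd parity are therefore immediate from Plancherel. Moreover, since $\al_m>0$ and $\sinc^2\at{z}\leq 1$ for every $z\in\Rset$, the symbol satisfies $b_\eps\at{k}\geq 1$ for all $k\in\Rset$ and all $\eps>0$. Hence $1/b_\eps\in \fspaceL^\infty\at\Rset$ with $\norm{1/b_\eps}_\infty\leq 1$, so $\calB_\eps$ is invertible on $\fspaceL^2\at\Rset$ via the Fourier multiplier $1/b_\eps$.

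The bulk of the work is the estimate \eqref{Lem:InversOfB.Eqn1}, which reduces by Plancherel to two symbol bounds:
\begin{align*}
\text{(i)}\;\;\sup_{\abs{k}\leq 4/\eps}\,\frac{1+k^2}{b_\eps\at{k}}\leq C\,,\qquad
\text{(ii)}\;\;\sup_{\abs{k}>4/\eps}\,\frac{1}{b_\eps\at{k}}\leq C\eps^2\,,
\end{align*}
uniformly in $0<\eps\leq 1$. Once these are established, writing $\widehat{\calB_\eps^{-1}G}=\hat G/b_\eps$ and splitting via $\pi_\eps$ gives $\norm{\Pi_\eps\calB_\eps^{-1}G}_{2,2}\sim\norm{(1+k^2)\pi_\eps\hat G/b_\eps}_2\leq C\norm{G}_2$ and $\eps^{-2}\norm{\at{\id-\Pi_\eps}\calB_\eps^{-1}G}_2\leq \eps^{-2}\cdot C\eps^2\norm{G}_2$, as required.

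For both bounds I only keep the $m=1$ summand (dropping the others is harmless since all $\al_m$ are positive), so it suffices to control $\at{1-\sinc^2\at{k\eps/2}}/\eps^2$ from below on the two regions. The cutoff $4/\eps$ is chosen so that $\abs{k\eps/2}\leq 2$ on the low-frequency region and $\abs{k\eps/2}>2$ on the high-frequency one. For (ii) I use the elementary estimate $\abs{\sinc\at{z}}\leq 1/\abs{z}$, which gives $\sinc^2\at{z}\leq 1/4$ and hence $1-\sinc^2\at{z}\geq 3/4$ for $\abs{z}\geq 2$; this yields $b_\eps\at{k}\geq 1+3\al_1/\at{4\eps^2}$ on $\abs{k}>4/\eps$, and (ii) follows. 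For (i) I use that the function $z\mapsto\at{1-\sinc^2\at{z}}/z^2$ is continuous on $\ccinterval{-2}{2}$, equals $1/3$ at the origin (from the Taylor expansion $\sinc\at{z}=1-z^2/6+\DO{z^4}$) and does not vanish on $\ccinterval{-2}{2}\setminus\{0\}$ (since $\sinc$ has no zero there). It is therefore bounded from below by some $c_1>0$, whence $1-\sinc^2\at{z}\geq c_1 z^2$ on $\abs{z}\leq 2$. This delivers $b_\eps\at{k}\geq 1+\tfrac{1}{4}\al_1 c_1 k^2$ on $\abs{k}\leq 4/\eps$, so $\at{1+k^2}/b_\eps\at{k}$ is bounded, giving (i); this corresponds exactly to the lower bound sketched in the right panel of Figure~\ref{Fig1}.

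The only mildly delicate point is making the lower bound $1-\sinc^2\at{z}\geq c_1 z^2$ quantitative and uniform in $\eps$ on the transitional region; this is purely a one-variable continuity argument and does not require any Fourier-pole analysis. The splitting into regions $\abs{k\eps/2}\leq 2$ and $\abs{k\eps/2}>2$ is what replaces the finer resolvent computation used in \cite{FP99} for $M=1$ and accommodates the general $M$ without having to track the complex zeros of $b_\eps$.
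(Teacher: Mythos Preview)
Your proof is correct and follows essentially the same route as the paper's: reduce to Fourier-multiplier bounds, split at $\abs{k}=4/\eps$, and bound $b_\eps$ from below on each region via an elementary lower estimate on $1-\sinc^2$. The only difference is that the paper keeps all $m$ terms and states the explicit $m$-uniform bound $1-\sinc^2(mz)\geq \tfrac16(\min\{\abs{z},2\})^2$, whereas you discard the $m\geq 2$ summands (legitimate, since they are nonnegative) and obtain the low-frequency constant $c_1$ by compactness rather than by an explicit value; both choices lead to the same conclusion with the same structure.
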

\begin{proof}
In view of \eqref{Eqn:OperatorBeps}, \eqref{Eqn:OperatorBSymb} and Lemma \ref{Lem:PropertiesOperatorA}, it remains to show \eqref{Lem:InversOfB.Eqn1}. Using the properties of the $sinc$ function, see Figure \ref{Fig1}, we readily demonstrate
\begin{align*}
1\geq 1-\sinc^2\at{mz}\geq \frac{\at{\min\{\abs{z},\,2\}}^2}{6}\qquad \text{for all} \quad z\in\Rset\quad\text{and} \quad m\in\Nset\,.
\end{align*}
Consequently, we get
\begin{align*}
1-\sinc^2\at{m\eps k /2}\geq 
\frac{1}{24}\left\{\begin{array}{lcl}%
\D\eps^2k^2&& \text{for $\;\abs{k}\leq \frac{4}{\eps}$}
\\
16&& \text{else}
\end{array}\right.%
\end{align*}
for all $m$, and hence
\begin{align*}
b_\eps\at{k}\geq  c \left\{\begin{array}{lcl}%
1+k^2&& \text{for $\;\abs{k}\leq \frac{4}{\eps}$}
\\
\D1/\eps^2&& \text{else}
\end{array}\right.%
\end{align*}
for some positive constant $c>0$. Moreover, noting that
\begin{align*}
\widehat{\calB_\eps^{-1}G}\at{k}= \frac{\widehat{G}\at{k}}{b_\eps\at{k}}
\end{align*}
and using Parseval's theorem we estimate
\begin{align*}
\norm{\Pi_\eps \calB_\eps^{-1} G}_{2,2}^2&= \int_{\abs{k}\leq \frac 4\eps}
\at{1+k^2+k^4}\frac{\babs{\widehat{G}(k)}^2}{b_\eps\at{k}^2}\dint k
\leq \frac{1}{c^2} 
\int_{\abs{k}\leq \frac4\eps}\frac{1+k^2+k^4}{1+2k^2+k^4}
\babs{\widehat{G}(k)}^2\dint{k}\leq
\frac{1}{c^2}\norm{G}_2^2
\end{align*}
as well as
\begin{align*}
\norm{\at{\id-\Pi_\eps}\calB_\eps^{-1} G}_{2}^2=
 \int_{\abs{k}\geq \frac{4}{\eps}}{\frac{\babs{\widehat{G}(k)}^2}{b_\eps\at{k}^2}}\dint k\leq \frac{\eps^4}{c^2} \norm{G}_2^2\,.
\end{align*}
so \eqref{Lem:InversOfB.Eqn1} follows immediately. 
\end{proof}
There exists another useful characterization of $\calB_\eps^{-1}$, which relies on the non-expansive estimate $\norm{\calA_{m\eps}W}_\infty \leq \norm{W}_\infty$, see Lemma \ref{Lem:PropertiesOperatorA}.
\begin{lemma}[von Neumann representation]
\label{Lem:vonNeumann}
We have
\begin{align*}
\calB_\eps^{-1}=\eps^2\sum_{i=0}^\infty \frac{\at{\sum_{m=1}^M\al_m m^2\calA_{m\eps}^2}^{i}}{\at{\eps^2+\sum_{m=1}^M \al_m m^2}^{i+1}}\,,
\end{align*}
where the series on the right hand converges for any $W\in\fspaceL^2\at\Rset$.
\end{lemma}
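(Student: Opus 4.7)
The strategy is to rewrite $\calB_\eps$ in the algebraic form $(\lambda\id - T)/\eps^2$ for a suitable contraction $T$ and scalar $\lambda$, and then invoke the standard geometric (von Neumann) series for $(\id - T/\lambda)^{-1}$.

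Concretely, I introduce the shorthand $s:=\sum_{m=1}^M\al_m m^2$ and the linear operator $S_\eps:=\sum_{m=1}^M \al_m m^2 \calA_{m\eps}^2$. Collecting terms in the definition \eqref{Eqn:OperatorBeps} yields the identity
\begin{align*}
\calB_\eps \;=\; \frac{1}{\eps^2}\bigl[(\eps^2+s)\id-S_\eps\bigr]
\;=\;\frac{\eps^2+s}{\eps^2}\left[\id-\frac{S_\eps}{\eps^2+s}\right],
\end{align*}
so that, at least formally,
\begin{align*}
\calB_\eps^{-1} \;=\; \frac{\eps^2}{\eps^2+s}\left[\id-\frac{S_\eps}{\eps^2+s}\right]^{-1}
\;=\;\eps^2\sum_{i=0}^\infty\frac{S_\eps^{\,i}}{(\eps^2+s)^{i+1}}\,,
\end{align*}
which is exactly the claimed representation.

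To turn this into a rigorous argument I need to control the operator $S_\eps/(\eps^2+s)$ on $\fspaceL^2\at\Rset$. From Lemma \ref{Lem:PropertiesOperatorA} the convolution operator $\calA_{m\eps}$ is non-expansive in the $\fspaceL^2$-norm, so $\norm{\calA_{m\eps}^2 W}_2\leq \norm{W}_2$, and the triangle inequality gives $\norm{S_\eps W}_2\leq s\norm{W}_2$. Hence
\begin{align*}
\Bnorm{\frac{S_\eps}{\eps^2+s}}_{\fspaceL^2\to\fspaceL^2}
\;\leq\;\frac{s}{\eps^2+s}\;<\;1\,,
\end{align*}
and the geometric series for the inverse converges absolutely in the operator norm on $\fspaceL^2\at\Rset$. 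In particular, the partial sums applied to any $W\in\fspaceL^2\at\Rset$ converge strongly in $\fspaceL^2\at\Rset$, which is the convergence statement in the lemma. Multiplying the identity $\bigl[\id-S_\eps/(\eps^2+s)\bigr]\sum_{i=0}^N (S_\eps/(\eps^2+s))^i = \id-(S_\eps/(\eps^2+s))^{N+1}$ through by $\eps^2/(\eps^2+s)$ and passing to the limit $N\to\infty$ verifies that the series indeed represents $\calB_\eps^{-1}$.

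There is no serious obstacle here: the only delicate point is the contraction bound for $S_\eps/(\eps^2+s)$, which is immediate from the non-expansivity in Lemma \ref{Lem:PropertiesOperatorA} together with the strict inequality $s/(\eps^2+s)<1$ for $\eps>0$. I note in passing that the same argument works in $\fspaceL^\infty\at\Rset$ by using the second non-expansive bound in Lemma \ref{Lem:PropertiesOperatorA}, which is presumably what the authors have in mind when they motivate the representation via the $\fspaceL^\infty$-estimate and which makes the von Neumann formula useful beyond the Hilbert-space setting treated in Lemma \ref{Lem:InversOfB}.
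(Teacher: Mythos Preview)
Your proof is correct and follows essentially the same route as the paper: rewrite $\calB_\eps=\frac{\eps^2+s}{\eps^2}\bigl(\id-\calI_\eps\bigr)$ with $\calI_\eps=S_\eps/(\eps^2+s)$ and invoke the Neumann series via the contraction bound $\nnorm{\calI_\eps}<1$. The only cosmetic difference is that the paper establishes the contraction in $\fspaceL^\infty$ first (using $\nnorm{\calA_{m\eps}W}_\infty\leq\nnorm{W}_\infty$) and then bootstraps to $\fspaceL^2$ via $\calI_\eps:\fspaceL^2\to\fspaceL^\infty$, whereas you go directly through the $\fspaceL^2$ non-expansivity; both are immediate from Lemma~\ref{Lem:PropertiesOperatorA}.
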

\begin{proof}
In the first step we regard all operators as defined on and taking values in $\fspaceL^\infty\at\Rset$. We also use the abbreviation
\begin{align*}
\calI_\eps := \frac{\sum_{m=1}^M \al_m m^2 \calA_{m\eps}^2}{\eps^2+c_0^2}
\end{align*}
and notice that \eqref{eq:speedscaling} and \eqref{Eqn:OperatorBeps} imply
\begin{align*}
\calB_\eps=
\frac{\eps^2+c_0^2}{\eps^2}\at{\mathrm{Id}-\calI_\eps}\,.
\end{align*}
Since the operator norm of $\calI_\eps$ -- computed with respect to the $\infty$-norm -- satisfies
\begin{align*}
\norm{\calI_\eps}_{{\mathrm{op}}}\leq\frac{c_0^2}{\eps^2+c_0^2}<1\,,
\end{align*}
the von Neumann formula provides
\begin{align}
\label{Lem:vonNeumann.PEqn1}
\calB_\eps^{-1}=\frac{\eps^2}{\eps^2+c_0^2}\Bat{\id +\calI_\eps+\calI_\eps^2+\tdots}=
\frac{\eps^2}{\eps^2+c_0^2}\id +\frac{\eps^2}{\eps^2+c_0^2}\Bat{\id+\calI_\eps+\calI_\eps^2+\tdots}\calI_\eps
\end{align}
in the sense of an absolutely convergent series of $\fspaceL^\infty$-operators. In the second step we generalize this result using the estimates from Lemma \ref{Lem:PropertiesOperatorA}.  In particular, the right-hand side in \eqref{Lem:vonNeumann.PEqn1} is well-defined for any $W\in\fspaceL^2\at\Rset$ since Lemma \ref{Lem:PropertiesOperatorA} ensures $\calI _\eps W\in\fspaceL^\infty\at\Rset$.
\end{proof}
\begin{corollary}[invariance properties of $\calB_\eps^{-1}$]
\label{Cor:InvarianceProperties}
The operator $\calB_\eps^{-1}$ respects for both $\eps>0$ and $\eps=0$ the 
nonnegativity, the evenness, and the unimodality of functions.
\end{corollary}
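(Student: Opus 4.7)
The plan is to read off both invariance claims directly from the two representations of $\calB_\eps^{-1}$ already at hand, using that nonnegativity, evenness, and the one-sided monotonicities defining unimodality are all stable under positive scalar multiplication, addition, composition, and passage to almost-everywhere subsequential limits.

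For $\eps>0$, I would invoke the von Neumann series of Lemma \ref{Lem:vonNeumann}, which displays $\calB_\eps^{-1}$ as an $\fspaceL^2$-convergent series with nonnegative coefficients in compositions of the operators $\calA_{m\eps}^2$. By Lemma \ref{Lem:PropertiesOperatorA}, each $\calA_{m\eps}$ -- and hence any composition $\calA_{m_1\eps}^2\calA_{m_2\eps}^2\cdots$ -- preserves nonnegativity, evenness, and unimodality. Linear combinations with nonnegative coefficients preserve each of these properties as well: this is obvious for nonnegativity and evenness, and for unimodality it follows because sums of nondecreasing functions on $(-\infty,0]$ and nonincreasing functions on $[0,\infty)$ remain monotone in the same sense. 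Each partial sum of the von Neumann series therefore sends an admissible $W$ to a function with the desired properties, and extracting an a.e.\ convergent subsequence transfers the pointwise properties to the limit $\calB_\eps^{-1}W$.

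For $\eps=0$ I would exploit that $\calB_0=\id-\ka_0\partial_x^2$ with $\ka_0:=\tfrac{1}{12}\sum_{m=1}^M\al_m m^4>0$ is a Helmholtz-type operator on the line whose inverse is convolution with the explicit exponential kernel
\begin{align*}
G_0\at{x}=\frac{1}{2\sqrt{\ka_0}}\exp\at{-\abs{x}/\sqrt{\ka_0}}\,.
\end{align*}
This kernel is itself nonnegative, even, and strictly unimodal, so $G_0\ast W$ is manifestly nonnegative and even whenever $W$ is. The only mildly delicate point -- and really the only nontrivial step in the entire corollary -- is the preservation of unimodality: for $0\leq x_1<x_2$, setting $d:=\at{x_2-x_1}/2$ and $m:=\at{x_1+x_2}/2$, the evenness of $G_0$ reduces the sign analysis of $\at{G_0\ast W}\at{x_1}-\at{G_0\ast W}\at{x_2}$ to
\begin{align*}
\int_0^\infty \bato{G_0\at{d-t}-G_0\at{d+t}}\bato{W\at{m-t}-W\at{m+t}}\dint{t}\,,
\end{align*}
in which both bracketed factors are nonnegative by the evenness and monotonicity on $[0,\infty)$ of $G_0$ and $W$. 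This is the standard Wintner-type observation that also underlies the corresponding claim for $\calA_\eta$ in Lemma \ref{Lem:PropertiesOperatorA}, and no new machinery is required beyond it.
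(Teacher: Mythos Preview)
Your argument is correct. For $\eps>0$ it coincides with the paper's proof: both invoke the von Neumann series of Lemma~\ref{Lem:vonNeumann} together with the invariance properties of $\calA_{m\eps}$ from Lemma~\ref{Lem:PropertiesOperatorA}.

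For $\eps=0$ the two approaches diverge. The paper does \emph{not} compute the explicit Green's function; instead it argues by approximation, using Lemma~\ref{Lem:LimitOperatorA} and the uniform bounds of Lemma~\ref{Lem:InversOfB} to pass from the already-established $\eps>0$ case to the limit $\calB_0^{-1}$. Your route via the explicit Helmholtz kernel $G_0\at{x}=\tfrac{1}{2\sqrt{\ka_0}}\exp\at{-\nabs{x}/\sqrt{\ka_0}}$ is more self-contained: it avoids any discussion of the convergence $\calB_\eps^{-1}\to\calB_0^{-1}$ and makes the mechanism (convolution with a nonnegative, even, strictly unimodal kernel) completely transparent. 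Conversely, the paper's approximation argument is shorter once the $\eps>0$ case is in hand and sidesteps the Wintner-type computation you spell out. One small remark: your unimodality step for $\eps=0$ uses the evenness of $W$ explicitly, so strictly speaking you prove that $\calB_0^{-1}$ maps even unimodal functions to even unimodal functions rather than treating unimodality in isolation; this is harmless here since the paper only ever applies the corollary to even profiles, and the same coupling is implicit in the reference \cite{Her10} underlying Lemma~\ref{Lem:PropertiesOperatorA}.
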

\begin{proof}
For $\eps>0$, all assertions follow from the representation formula in Lemma \ref{Lem:vonNeumann} and the corresponding properties of the operators $\calA_{m\eps}$, see Lemma \ref{Lem:PropertiesOperatorA}. For $\eps=0$ we additionally employ the approximation results from Lemma \ref{Lem:LimitOperatorA} as well as the estimates from Lemma \ref{Lem:InversOfB}.
\end{proof}
Note that all results concerning $\calB_\eps^{-1}$ are intimately related to the supersonicity condition $c_\eps^2> c_0^2$. In a subsonic setting, one can still establish partial inversion formulas but the analysis is completely different, cf. \cite{HMSZ13} for an application in a different context.
%
\section{Proof of the main result}\label{sect:proof}
%
%
In view of the wave-speed scaling \eqref{eq:speedscaling} and the fixed point formulation \eqref{eq:scaledfpu1a}, the rescaled traveling wave problem consists in finding solutions $W_\eps\in\fspaceL^2\at\Rset$ to the operator equation
\begin{align}
\label{Eqn:RescaledTWEqn}
\calB_\eps W_\eps  = \calQ_\eps\ato{W_\eps}+\eps^2 \calP_\eps\ato{W_\eps}\,,
\end{align}
where the linear operator $\calB_\eps$ has been introduced 
in \eqref{Eqn:OperatorBeps}. Moreover, the 
nonlinear operators 
\begin{align}
\label{Eqn:NonlinOp.W}
\calQ_\eps\ato{W}:= \sum_{m=1}^M \be_mm^3 \calA_{m\eps}\at{\calA_{m\eps}W}^2\,,\qquad
\calP_\eps\ato{W}:=
\frac{1}{\eps^6}\sum_{m=1}^M m \calA_{m\eps} \Psi_m^\prime\at{m \eps^2 \calA_{m\eps} W}
\end{align}
encode the quadratic and cubic nonlinearities, respectively, and are scaled such that the respective formal $\eps$-expansions involve nontrivial leading order terms. In particular, we have
\begin{align}
\label{Eqn:NonlinOp.Q}
\calQ_\eps\ato{W}\quad\xrightarrow{\;\;\eps\to0\;\;}\quad
\calQ_0\ato{W}:=\at{\sum_{m=1}^M \be_mm^3} W^2\,,
\end{align}
for any fixed $W\in\fspaceL^2\at\Rset$, see \eqref{Lem:LimitOperatorA.Eqn0}. Note also 
that \eqref{Eqn:RescaledTWEqn} always admits the trivial solution $W_\eps\equiv0$.
\par
In what follows we solve the leading order problem to obtain the KdV wave $W_0$, transform \eqref{Eqn:RescaledTWEqn} via the ansatz \eqref{eqn.def.corrector} into another fixed point equation, and employ the contraction mapping principle to prove the existence of a corrector $V_\eps\in\fspaceL^2_\even\at\Rset$ for all sufficiently small $\eps>0$. In \cite{FP99}, the last step has been solved using a operator-valued variant of the implicit function theorem.
%
%
\subsection{The leading order problem and the KdV wave}\label{sect:proof.1}
%
%
Passing formally to limit $\eps\to0$ in \eqref{Eqn:RescaledTWEqn}, we obtain the leading order equation
\begin{align}
\label{Eqn:LeadingOrderProblem.1}
\calB_0 W_0 = \calQ_0\ato{W_0}\,,
\end{align}
which is the ODE \eqref{Eqn:WaveODE} with parameters
\begin{align}
\label{Eqn:LeadingOrderProblem.2x}
d_1 := \frac{12}{\sum_{m=1}^M \al_m m^4}\,,\qquad
d_2 := \frac{12\sum_{m=1}^M \be_mm^3}{\sum_{m=1}^M \al_m m^4}\,.
\end{align}
In particular, the leading order problem is a planar Hamiltonian ODE with conserved quantity $E=\tfrac12\at{W^\prime}^2+\tfrac13d_2W^3-\tfrac12d_1W^2$
and admits precisely one homoclinic orbit as shown in  Figure \ref{Fig3}. 
\begin{figure}[t!]%
\centering{%
\includegraphics[width=0.8\textwidth]{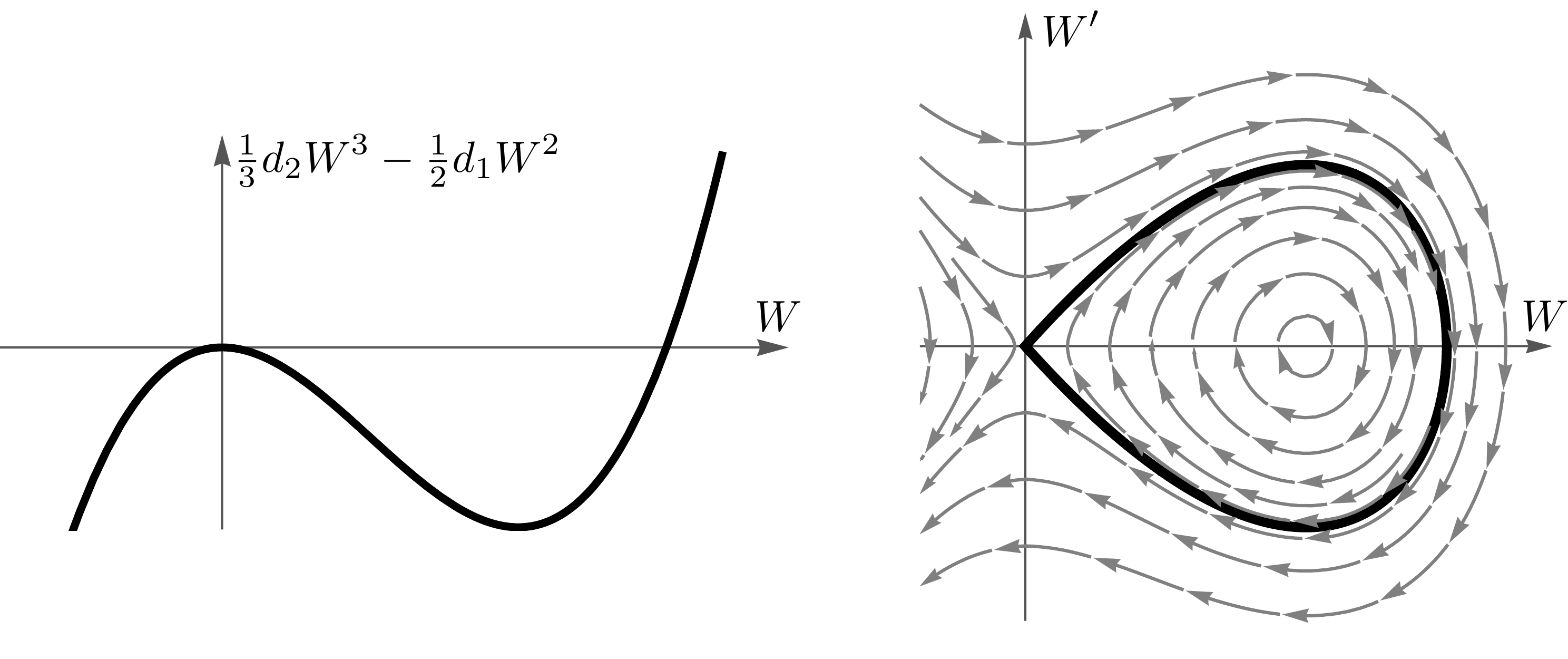}%
}%
\caption{%
Potential energy (\emph{left panel}) and phase diagram (\emph{right panel}) for the nonlinear oscillator ODE \eqref{Eqn:WaveODE} with coefficients \eqref{Eqn:LeadingOrderProblem.2x}, which determines the KdV wave $W_0$. There exists precisely one homoclinic orbit (solid black curve in the right panel) which corresponds to the solitary wave $W_0$. The closed loops inside the homoclinic orbits correspond to periodic KdV waves, see \cite{FM14}.
}%
\label{Fig3}%
\end{figure}%
\begin{lemma}[linear and nonlinear leading-order problem]
\label{Lem:LeadingOrder}
There exists a unique solution $W_0\in\fspaceL^2_\even\at\Rset$ to \eqref{Eqn:LeadingOrderProblem.1}, which is 
moreover smooth, pointwise positive, and exponentially decaying.  Moreover, the $\fspaceL^2$-kernel of the linear operator $\calL_0$ with
\begin{align}
\label{Lem:LeadingOrder.Eqn1}
\calL_0 V:= \calB_0 V- \calM_0V\,,\qquad \calM_0V:=2\at{\sum_{m=1}^M \beta_m m^3}W_0 V
\end{align}
is simple and spanned by the odd function $W_0^\prime$.
\end{lemma}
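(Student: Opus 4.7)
The plan is to reduce \eqref{Eqn:LeadingOrderProblem.1} to the explicit planar Hamiltonian ODE \eqref{Eqn:WaveODE} and then read off both the solitary wave $W_0$ and the one-dimensional kernel of $\calL_0$ from the phase portrait together with classical asymptotic theory for linear ODEs with decaying coefficients.

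First I would note that by the definitions of $\calB_0$ in \eqref{Eqn:OperatorB0}, $\calQ_0$ in \eqref{Eqn:NonlinOp.Q}, and $d_1, d_2$ in \eqref{Eqn:LeadingOrderProblem.2x}, the equation $\calB_0 W_0 = \calQ_0\ato{W_0}$ is equivalent, after multiplication by $d_1$, to $W_0^{\prime\prime} = d_1 W_0 - d_2 W_0^2$. This is a conservative one-degree-of-freedom mechanical system with energy $E\at{W, W^\prime} = \tfrac12\at{W^\prime}^2 - \tfrac12 d_1 W^2 + \tfrac13 d_2 W^3$. Since $d_1,d_2>0$, the origin is a hyperbolic saddle with Lyapunov exponents $\pm\sqrt{d_1}$, and the level set $\{E = 0\}$ contains exactly one homoclinic loop, which lies in $\{W>0\}$ (see Figure \ref{Fig3}). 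A direct integration identifies this loop with the classical $\mathrm{sech}^2$-profile
\begin{equation*}
W_0\at{x} = \frac{3 d_1}{2 d_2}\,\mathrm{sech}^2\!\bat{\tfrac12\sqrt{d_1}\,x}\,,
\end{equation*}
which is smooth, even, strictly positive, and decays like $e^{-\sqrt{d_1}\abs{x}}$. For uniqueness I would argue that any nontrivial $W_0\in\fspaceL^2_\even\at\Rset$ solving \eqref{Eqn:LeadingOrderProblem.1} lies in $\fspace{W}^{2,2}\at\Rset$ (because $\calB_0^{-1}$ maps $\fspaceL^2$ into $\fspace{W}^{2,2}$ and a short bootstrap handles the quadratic right-hand side), hence is continuous and vanishes at infinity by the one-dimensional Sobolev embedding. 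Such a $W_0$ must therefore be a bounded orbit asymptotic to the saddle in both directions and hence a translate of the homoclinic; evenness picks out the unique representative centered at the origin.

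For the kernel claim I would differentiate the ODE with respect to $x$ to obtain $\at{W_0^\prime}^{\prime\prime} = d_1 W_0^\prime - 2 d_2 W_0 W_0^\prime$, which after dividing by $d_1$ rearranges exactly to $\calL_0 W_0^\prime = 0$. Since $W_0$ is smooth, even, and exponentially decaying, $W_0^\prime$ is smooth, odd, exponentially decaying, and in particular an element of $\fspaceL^2\at\Rset$. To show that this exhausts the $\fspaceL^2$-kernel, I would invoke Levinson's asymptotic theorem for the second-order linear equation
\begin{equation*}
V^{\prime\prime} = \bat{d_1 - 2 d_2 W_0\at{x}}\,V\,,
\end{equation*}
whose coefficient converges exponentially to $d_1 > 0$ as $x\to\pm\infty$. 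The two-dimensional solution space then admits, separately at each end, a basis of one solution decaying like $e^{-\sqrt{d_1}\abs{x}}$ and one growing like $e^{+\sqrt{d_1}\abs{x}}$. Since $W_0^\prime$ already spans the decaying direction at $+\infty$, any other $V\in\ker\calL_0\cap\fspaceL^2\at\Rset$ must be a scalar multiple of $W_0^\prime$ at $+\infty$ and therefore everywhere by uniqueness of initial value problems; this gives simplicity.

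The main obstacle I anticipate is not the phase-plane analysis itself but the preparatory regularity step: turning the integral identity $\calB_0 W_0 = \calQ_0\ato{W_0}$ for a mere $\fspaceL^2$-function into a classical, decaying ODE solution on which both the homoclinic uniqueness and the Levinson-type argument depend. This is handled by iterating the $\fspaceL^2\to\fspace{W}^{2,2}$ smoothing of $\calB_0^{-1}$ against the nonlinearity and then using Sobolev embedding to access $\fspaceL^\infty$-control and decay at infinity, after which the remaining arguments reduce to standard qualitative ODE theory.
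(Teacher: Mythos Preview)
Your proposal is correct and covers the same ground as the paper, but the argument for kernel simplicity differs. The paper dispatches this with a Wronskian argument: assuming two linearly independent $\fspaceL^2$-kernel functions $V_1,V_2$, the Wronskian $\omega(x)=V_1V_2'-V_1'V_2$ satisfies $\omega'\equiv0$ because the equation has the form $V''=p(x)V$, yet $\omega(x)\to0$ at infinity since any $\fspaceL^2$-solution and its derivative decay there; this forces $\omega\equiv0$, a contradiction. Your route through Levinson's theorem is also valid and makes the exponential dichotomy explicit, which is informative, but it imports more machinery than needed here. The Wronskian argument is shorter and uses only that $L^2$-solutions of $V''=(d_1-2d_2W_0)V$ have $V,V'\to0$ at infinity, which follows from the same regularity bootstrap you already set up. Your treatment of existence, regularity, and the identification of $W_0'$ as a kernel element is more detailed than the paper's (which simply cites ``standard ODE arguments''), and that extra care is appropriate.
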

\begin{proof}
The existence and uniqueness of $W_0$ follow from standard ODE arguments and the identity
$\calL_0W_0^\prime=0$ holds by construction. Moreover, the simplicity of the  $\fspaceL^2$-kernel of the differential operator $\calL_0$ can be proven by the following Wronski-type argument: Suppose for contradiction that $V_1, V_2\in\fspaceL^2\at\Rset$ are two linearly independent kernel functions of $\calL_0$ such that $\om\at{0}\neq0$, where
\begin{align*}
,\qquad \om\at{x}:=\det\begin{pmatrix} V_1\at{x}& V_2\at{x}\\
V_1^\prime\at{x}& V_2^\prime\at{x}
\end{pmatrix}\,.
\end{align*} 
The ODE $\calL_0V_i=0$ combined with $V_i\in\fspaceL^2\at\Rset$ implies that $V_i$ and $V_i^\prime$ are continuous functions with
\begin{align*}
\abs{V_i\at{x}}+\abs{V_i^\prime\at{x}}\quad\xrightarrow{\;\abs{x}\to\infty\;}\quad0\,,
\end{align*}
and we conclude that $\om\at{x}\to0$ as $\abs{x}\to\infty$. On the other hand, we easily compute $\om^\prime\at{x}=0$ and obtain the desired contradiction.
\end{proof}
Since $W_0$ is smooth, it satisfies \eqref{Eqn:RescaledTWEqn} up to small error terms.
In particular, the corresponding linear and the quadratic terms almost cancel due to \eqref{Eqn:LeadingOrderProblem.1}.
\begin{lemma}[$\eps$-residual of $W_0$]
\label{Lem.epsResidual} 
There exists a constant $C$ such that 
\begin{align}
\label{Lem.epsResidual.Eqn1} 
\norm{R_\eps}_2+\norm{S_\eps}_2\leq C\qquad \text{with}\qquad R_\eps : =\frac{\calQ_\eps\ato{W_0}-\calB_\eps W_0}{\eps^2} \,,\qquad S_\eps : =\calP_\eps\ato{W_0}
\end{align}
holds for all $0<\eps\leq 1$.
\end{lemma}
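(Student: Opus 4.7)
The strategy is to exploit the leading-order identity $\calB_0 W_0=\calQ_0[W_0]$ from \eqref{Eqn:LeadingOrderProblem.1} to rewrite
\begin{align*}
\eps^2 R_\eps =\bat{\calQ_\eps[W_0]-\calQ_0[W_0]}-\bat{\calB_\eps W_0-\calB_0 W_0},
\end{align*}
and then show that each summand has $\fspaceL^2$-norm $\DO{\eps^2}$ using the quantitative estimates of Lemma \ref{Lem:LimitOperatorA}, together with the smoothness and exponential decay of $W_0$ from Lemma \ref{Lem:LeadingOrder} which ensure that every Sobolev norm of $W_0$ is finite. The cubic residual $S_\eps$ is handled by a separate direct argument that exploits the cubic vanishing of $\Psi_m^\prime$ at the origin.

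For the linear contribution I would exploit
\begin{align*}
\calA_{m\eps}^2-\id =2\bat{\calA_{m\eps}-\id}+\bat{\calA_{m\eps}-\id}^2,
\end{align*}
apply the sharper bound \eqref{Lem:LimitOperatorA.EqnB} to the first summand and the coarser bound \eqref{Lem:LimitOperatorA.EqnA} twice to the second; when summed against the coefficients $\al_m m^2/\eps^2$ the emerging $\partial_x^2$ contributions add up to exactly $\tfrac{\sum_m\al_m m^4}{12}\partial_x^2 W_0$, which is precisely the term that cancels $\calB_0-\id$ in view of \eqref{Eqn:OperatorB0}. The remainders are then bounded in $\fspaceL^2$ by $\eps^2$ times finitely many Sobolev norms of $W_0$. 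For the quadratic contribution I would telescope
\begin{align*}
\calA_{m\eps}\bat{\calA_{m\eps}W_0}^2-W_0^2 =\bat{\calA_{m\eps}-\id}W_0^2+\calA_{m\eps}\Bat{\bat{\calA_{m\eps}W_0-W_0}\bat{\calA_{m\eps}W_0+W_0}},
\end{align*}
estimate the first summand by $C\eps^2\|(W_0^2)^{\prime\prime}\|_2$ via \eqref{Lem:LimitOperatorA.EqnA}, and control the second by combining the $\fspaceL^\infty$-bound $\|\calA_{m\eps}W_0+W_0\|_\infty\leq 2\|W_0\|_\infty$ from Lemma \ref{Lem:PropertiesOperatorA} with the $\fspaceL^2$ part of \eqref{Lem:LimitOperatorA.EqnA} applied to $W_0$ itself, using non-expansivity of $\calA_{m\eps}$ in the outer layer.

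To bound $S_\eps$, I would integrate Assumption~\ref{MainAssumption.Eqn1} twice starting from $\Psi_m^\prime(0)=0$ to obtain $|\Psi_m^\prime(r)|\leq\tfrac{\ga_m}{3}|r|^3$ for $|r|\leq 1$. Since $\|m\eps^2\calA_{m\eps}W_0\|_\infty\leq M\eps^2\|W_0\|_\infty\leq 1$ for $\eps$ sufficiently small, this yields the pointwise estimate
\begin{align*}
\babs{\calP_\eps[W_0](x)}\leq \frac{1}{\eps^6}\sum_{m=1}^M m\cdot\frac{\ga_m m^3\eps^6}{3}\,\calA_{m\eps}\babs{\calA_{m\eps}W_0}^3(x),
\end{align*}
in which the singular $\eps^{-6}$ factor cancels exactly. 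Taking $\fspaceL^2$-norms, using $\norm{\calA_{m\eps}V}_2\leq\norm{V}_2$, and estimating $\||\calA_{m\eps}W_0|^3\|_2\leq \|W_0\|_\infty^2\|W_0\|_2$ closes the argument. I expect the main technical subtlety to lie in the linear estimate: the $\eps^{-2}$ prefactor in $\calB_\eps$ forces one to go to the fourth-order Taylor expansion \eqref{Lem:LimitOperatorA.EqnB} and to track the cancellation with $\calB_0$ carefully; the crude second-order bound \eqref{Lem:LimitOperatorA.EqnA} alone would only yield $\DO{1}$ rather than the required $\DO{\eps^2}$.
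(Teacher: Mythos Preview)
Your proof is correct and matches the paper's approach almost exactly: the same reduction via $\calB_0 W_0=\calQ_0[W_0]$, the same telescoping for $\calQ_\eps-\calQ_0$, and the same cubic pointwise bound on $\Psi_m'$ for $S_\eps$. The only difference is cosmetic---for the linear piece the paper uses the identity $\calA_{m\eps}^2 W_0-W_0-\tfrac{m^2\eps^2}{12}W_0''=(\calA_{m\eps}+\id)\bigl(\calA_{m\eps}W_0-W_0-\tfrac{m^2\eps^2}{24}W_0''\bigr)+\tfrac{m^2\eps^2}{24}\bigl(\calA_{m\eps}W_0''-W_0''\bigr)$ rather than your decomposition $\calA_{m\eps}^2-\id=2(\calA_{m\eps}-\id)+(\calA_{m\eps}-\id)^2$, but both routes deliver the required $O(\eps^4)$ estimate from \eqref{Lem:LimitOperatorA.EqnA} and \eqref{Lem:LimitOperatorA.EqnB} in the same way.
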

\begin{proof}
We first notice that Lemma \ref{Lem:PropertiesOperatorA} ensures
\begin{align*}
\norm{\calA _{m\eps}^2W_0}_2\leq\norm{\calA _{m\eps}W_0}_2\leq\norm{W_0}_2\,,\qquad
\norm{\calA _{m\eps}W_0}_\infty\leq \norm{W_0}_\infty
\end{align*}
and in view of Assumption \ref{MainAssumption} we find
\begin{align*}
\norm{S_\eps}_2&\leq\frac{1}{\eps^6}\sum_{m=1}^M m \ga_m\norm{m\eps^2\calA_{m\eps}W_0}_\infty^2\norm{m\eps^2\calA_{m\eps}^2W_0}_2
\leq  C\,.
\end{align*}
Thanks to the smoothness of $W_0$, Lemma \ref{Lem:LimitOperatorA} 
provides a constant $C$ such that
\begin{align*}
\norm{\calA_{m\eps}W_0^j-W_0^j}_2+\norm{\calA_{m\eps}W_0^j-W_0^j}_\infty \leq Cm^2\eps^2
\end{align*}
holds for $j\in\{1,2\}$, and this implies
\begin{align*}
\Bnorm{\calA_{m\eps}\bat{\calA_{m\eps}W_0}^2-W_0^2}_2&\leq
\Bnorm{\calA_{m\eps}\bat{\calA_{m\eps}W_0}^2-\calA_{m\eps}W_0^2}_2+\Bnorm{\calA_{m\eps}W_0^2-W_0^2}_2
\\&\leq
\Bnorm{\bat{\calA_{m\eps}W_0}^2-W_0^2}_2+
Cm^2\eps^2
\\&\leq\bat{\norm{\calA_{m\eps}W_0}_\infty+\norm{W_0}_\infty}\norm{\calA_{m\eps}W_0-W_0}_2+
Cm^2\eps^2
\\&\leq Cm^2\eps^2
\end{align*}
and hence
\begin{align*}
\bnorm{\calQ_\eps\ato{W_0}-\calQ_0\ato{W_0}}_2=
\norm{\sum_{m=1}^M \be_m m^3 \calA_{m\eps}\at{\calA_{m\eps}W_0}^2-\at{\sum_{m=1}^{M}\be_m m^3} W_0^2}_2\leq C\eps^2.
\end{align*}
Therefore, and since $W_0$ satisfies \eqref{Eqn:LeadingOrderProblem.1}, we get
\begin{align}
\label{Lem.epsResidual.PEqn1}
\norm{R_\eps}_2&\leq \frac{\norm{\calB_\eps W_0- \calB_0W_0}_2}{\eps^2}+C
\leq 
\sum_{m=1}^M\al_mm^2\frac{\Bnorm{\calA_{m\eps}^2 W_0 - W_0- \frac{m^2\eps^2}{12}W_0^{\prime\prime}}_2}{\eps^4}+C\,,
\end{align}
where the second inequality stems from the definitions of $\calB_\eps$ and $\calB_0$, see \eqref{Eqn:OperatorBeps} and \eqref{Eqn:OperatorB0}.
Lemma \ref{Lem:LimitOperatorA} also yields
\begin{align*}
\norm{\calA_{m\eps}W_0-W_0-\frac{\eps^2m^2}{24}W_0^{\prime\prime}}_2\leq Cm^4\eps^4\,,\qquad
\norm{\calA_{m\eps}W_0^{\prime\prime}-W_0^{\prime\prime}}_2\leq Cm^2\eps^2
\end{align*}
and combining this with \eqref{Lem:PropertiesOperatorA.Eqn1}$_2$ and the identity
\begin{align*}
\calA_{m\eps}^2 W_0 - W_0- \frac{m^2\eps^2}{12}W_0^{\prime\prime}
&=\at{\calA_{m\eps}+\mathrm{id}}\Bat{\calA_{m\eps} W_0 - W_0- \frac{m^2\eps^2}{24}W_0^{\prime\prime}}+\frac{m^2\eps^2}{24}\Bat{\calA_{m\eps}W_0^{\prime\prime}-W_0^{\prime\prime}}\,,
\end{align*}
we arrive at
\begin{align*}
\norm{\calA_{m\eps}^2W_0-W_0-\frac{\eps^2m^2}{24}W_0^{\prime\prime}}_2&\leq 2\norm{\calA_{m\eps} W_0 - W_0- \frac{m^2\eps^2}{24}W_0^{\prime\prime}}_2+\frac{m^2\eps^2}{24}\norm{\calA_{m\eps}W_0^{\prime\prime}-W_0^{\prime\prime} }_2
\\&\leq Cm^4\eps^4\,.
\end{align*}
The desired estimate for $R_\eps$ is now a direct consequence of
\eqref{Lem.epsResidual.PEqn1}.
\end{proof}
For completeness we mention that 
\begin{align*}
W_0\at{x}=\frac{3d_1}{2d_2}\,{\mathrm{sech}}^2\at{\tfrac12\sqrt{d_1}{x}}
\end{align*}
can be verified by direct calculations and that formulas for the spectrum of $\calL_0$ can, for instance, be found in \cite[page 768]{MF53}; see also \cite[Lemma 4.2]{FP99}.
%
\subsection{The linearized traveling wave equation \texorpdfstring{for $\eps>0$}{}}\label{sect:proof.2}
%
%
For any $\eps>0$, we define the linear operator $\calL_\eps$ on $\fspaceL^2\at\Rset$ by
\begin{align}
\label{Eqn:DefLandM}
\calL_\eps V:= \calB_\eps V-\calM_\eps V\,,\qquad \calM_\eps V:= 2\sum_{m=1}^M \beta_m m^3 \calA_{m\eps}\Bat{\at{\calA_{m\eps}W_0}\at{\calA_{m\eps} V}},
\end{align}
where $W_0\in\fspaceL^2_\even\at\Rset$ is the unique even KdV wave provided by Lemma \ref{Lem:LeadingOrder}. This operator appears naturally in the linearization \eqref{Eqn:RescaledTWEqn} around $W_0$ as
\begin{align*}
\calB_\eps\at{W_0+\eps^2V}-\calQ_\eps\ato{W_0+\eps^2V} = - \eps^2 R_\eps + \eps^2 \calL_\eps V - \eps^4 \calQ_\eps\at{V}
\end{align*}
holds due to the linearity of $\calB_\eps$ and the quadraticity of $\calQ_\eps$. 
\begin{lemma}[elementary properties of $\calL_\eps$]
\label{Lem:PropertiesOfL}
For any $\eps>0$, the operator $\calL_\eps$ is self-adjoint in $\fspaceL^2\at\Rset$ and respects the even-odd parity. Moreover, we have
\begin{align*}
\calL_\eps W\quad\xrightarrow{\;\eps\to0\;}\quad \calL_0 W\qquad \text{strongly in}\quad \fspaceL^2\at\Rset
\end{align*}
for any $W\in\fspaceW^{2,2}\at\Rset$. 
\end{lemma}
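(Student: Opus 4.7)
The plan is to treat each of the three assertions separately, the first two by direct algebraic manipulation using the already established properties of $\calA_{m\eps}$, and the third by Fourier analysis plus a product-splitting argument.

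First I would handle the self-adjointness of $\calL_\eps$. Since $\calB_\eps$ is self-adjoint on $\fspaceL^2\at\Rset$ by Lemma~\ref{Lem:InversOfB}, it suffices to verify that $\calM_\eps$ is self-adjoint. Using that each $\calA_{m\eps}$ is self-adjoint (Lemma~\ref{Lem:PropertiesOperatorA}), the pairing $\skp{\calM_\eps V}{U}$ collapses to the manifestly symmetric trilinear form
\begin{align*}
2\sum_{m=1}^M \be_m m^3 \int_\Rset\bat{\calA_{m\eps}W_0}\bat{\calA_{m\eps}V}\bat{\calA_{m\eps}U}\,\dint{x}\,,
\end{align*}
which proves the claim. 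The parity statement is equally direct: $\calB_\eps$ respects even-odd parity by Lemma~\ref{Lem:InversOfB}, and for $\calM_\eps$ I would invoke the fact that $\calA_{m\eps}$ preserves parity and that $W_0$ is even (Lemma~\ref{Lem:LeadingOrder}); the integrand $\at{\calA_{m\eps}W_0}\at{\calA_{m\eps}V}$ then has the same parity as $V$, which is preserved by the outer $\calA_{m\eps}$.

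For the strong convergence on $\fspaceW^{2,2}\at\Rset$, I split $\calL_\eps W-\calL_0 W =\bat{\calB_\eps-\calB_0}W-\bat{\calM_\eps-\calM_0}W$ and treat the two differences separately. For the linear piece, I work in Fourier space: the symbol of $\bat{\calB_\eps-\calB_0}$ is $\sum_{m=1}^M\al_mm^2\Bat{\frac{1-\sinc^2\at{m\eps k/2}}{\eps^2}-\frac{m^2k^2}{12}}$, which converges pointwise to $0$ and, using the elementary inequality $1-\sinc^2\at{z}\leq z^2/3$, is dominated by $Ck^2$. Since $k^2\widehat{W}\at{k}\in\fspaceL^2$ for $W\in\fspaceW^{2,2}$, dominated convergence in Fourier space gives $\bnorm{\bat{\calB_\eps-\calB_0}W}_2\to0$.

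For the quadratic piece, I would use the telescoping identity
\begin{align*}
\bat{\calA_{m\eps}W_0}\bat{\calA_{m\eps}W}-W_0 W=\bat{\calA_{m\eps}W_0-W_0}\bat{\calA_{m\eps}W}+W_0\bat{\calA_{m\eps}W-W}\,.
\end{align*}
The first summand is bounded in $\fspaceL^2$ by $\norm{\calA_{m\eps}W_0-W_0}_\infty\norm{W}_2$, which tends to $0$ because $W_0$ is smooth (Lemma~\ref{Lem:LimitOperatorA}). The second summand is bounded by $\norm{W_0}_\infty\norm{\calA_{m\eps}W-W}_2$, which tends to $0$ since $W\in\fspaceW^{2,2}$. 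One more application of $\calA_{m\eps}$ (contractive on $\fspaceL^2$) together with the strong convergence $\calA_{m\eps}\at{W_0 W}\to W_0 W$ in $\fspaceL^2$ from \eqref{Lem:LimitOperatorA.Eqn0} then yields $\bnorm{\bat{\calM_\eps-\calM_0}W}_2\to 0$. I expect the main technical subtlety to lie in extracting a pointwise $\eps$-uniform domination for the Fourier symbol of $\calB_\eps-\calB_0$; everything else reduces to bookkeeping with the estimates already collected in Section~\ref{sect:prelim}.
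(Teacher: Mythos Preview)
Your proposal is correct and is essentially a fully unpacked version of the paper's one-line proof, which simply says that ``since $W_0$ is smooth and even, all assertions follow directly from the properties of $\calA_{m\eps}$ and $\calB_\eps$'' with references to \eqref{Eqn:OperatorBeps} and Lemma~\ref{Lem:LimitOperatorA}. The only mild methodological difference is that for the convergence $\calB_\eps W\to\calB_0 W$ you use a dominated-convergence argument on the Fourier side (pointwise convergence of the symbol plus the uniform bound $Ck^2$), whereas the paper implicitly appeals to the real-space expansion estimates of Lemma~\ref{Lem:LimitOperatorA}; your route has the advantage of working cleanly for $W\in\fspaceW^{2,2}$ without needing fourth derivatives, so it is in fact the more transparent justification of this step.
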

\begin{proof}
Since $W_0$ is smooth and even, all assertions follow directly from the properties
of $\calA_{m\eps}$ and $\calB_\eps$, see \eqref{Eqn:OperatorBeps} and Lemma \ref{Lem:LimitOperatorA}.
\end{proof}
Our perturbative approach requires to invert the operator $\calL_\eps$ on the space $\fspaceL^2_\even\at\Rset$ -- see the fixed point problem in Theorem \ref{Thm:FixedPoints} below -- and in view of Lemma \ref{Lem:LeadingOrder} one easily shows that $\calL_0$ has this properties. The singularly perturbed case $\eps>0$, however, is more involved and addressed in the following theorem,  which is 
actually the key asymptotic result in our approach.  Notice that the analogue for $M=1$ is not stated explicitly in \cite{FP99} although it could be derived from the asymptotic estimates therein.
\begin{theorem}[uniform invertibility of $\calL_\eps$]
\label{Lem:InvertibilityOfLeps}
There exists $0<\eps_*\leq1$ such that for any $0<\eps\leq\eps_*$ the operator $\calL_\eps$ is continuously invertible on $\fspaceL^2_\even\at\Rset$. More precisely, there exists a constant $C$ which depends on $\eps_*$ but not on $\eps$ such that
\begin{align*}
\norm{\calL_\eps^{-1}G}_2\leq C\norm{G}_2
\end{align*}
holds for all $0<\eps\leq\eps_*$ and any $G\in\fspaceL^2_\even\at\Rset$.
\end{theorem}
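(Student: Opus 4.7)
The plan is to write $\calL_\eps = \calB_\eps(\id - \calK_\eps)$ with $\calK_\eps := \calB_\eps^{-1}\calM_\eps$ and exploit the almost-compactness of $\calK_\eps$ coming from Lemma \ref{Lem:InversOfB}. Heuristically, $\Pi_\eps \calB_\eps^{-1}$ gains two derivatives while $(\id-\Pi_\eps)\calB_\eps^{-1}$ is of order $\eps^2$, and $\calM_\eps$ is a bounded multiplication-type perturbation built from the smooth, exponentially decaying profile $W_0$. Since Lemma \ref{Lem:LeadingOrder} ensures that $\calL_0$ has trivial kernel on $\fspaceL^2_\even\at\Rset$, a Fredholm-type contradiction argument should transport invertibility from $\eps=0$ to small $\eps>0$.

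\textbf{Reduction.} The first step is to reduce the claim to the uniform a priori estimate
\begin{align*}
\norm{V}_2 \leq C\,\norm{\calL_\eps V}_2 \qquad \text{for all}\quad V\in\fspaceL^2_\even\at\Rset\,,\quad 0<\eps\leq\eps_*\,.
\end{align*}
Combined with a Fredholm alternative on $\fspaceL^2_\even\at\Rset$ -- whose Fredholm property follows because $\calK_\eps$ is compact up to a norm-$\DO{\eps^2}$ remainder (apply Lemma \ref{Lem:InversOfB} on the low-frequency block $\Pi_\eps\calK_\eps$ and combine with the spatial decay of $W_0$ and Rellich's theorem) -- this estimate immediately yields both bijectivity and the uniform inverse bound.

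\textbf{The a priori estimate.} I would argue by contradiction: suppose $\eps_n\to0$ and $V_n\in\fspaceL^2_\even\at\Rset$ with $\norm{V_n}_2=1$ but $G_n:=\calL_{\eps_n}V_n\to 0$ in $\fspaceL^2\at\Rset$. Rewriting the equation as
\begin{align*}
V_n=\calB_{\eps_n}^{-1}\calM_{\eps_n}V_n+\calB_{\eps_n}^{-1}G_n\,,
\end{align*}
Lemma \ref{Lem:InversOfB} supplies a uniform $\fspaceW^{2,2}$-bound on $\Pi_{\eps_n}V_n$ and an $\DO{\eps_n^2}$-bound on $(\id-\Pi_{\eps_n})V_n$ in $\fspaceL^2\at\Rset$. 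Passing to a subsequence, $V_n$ converges weakly in $\fspaceL^2\at\Rset$ and strongly in $\fspaceL^2_{\mathrm{loc}}\at\Rset$ to some even $V_0$. Using Lemma \ref{Lem:LimitOperatorA} to replace the $\calA_{m\eps_n}$ by the identity and exploiting the exponential decay of $W_0$ to bound the tails, this local convergence is upgraded to the strong $\fspaceL^2$-convergence $\calM_{\eps_n}V_n\to\calM_0 V_0$. Combining with Lemma \ref{Lem:PropertiesOfL} and letting $n\to\infty$ in the equation yields $\calL_0 V_0=0$, so Lemma \ref{Lem:LeadingOrder} forces $V_0=0$ (the kernel of $\calL_0$ is spanned by the odd function $W_0'$). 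Plugging $V_0=0$ back into the displayed identity together with the uniform boundedness of $\calB_{\eps_n}^{-1}$ gives $\norm{V_n}_2\to 0$, contradicting $\norm{V_n}_2=1$.

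\textbf{Main obstacle.} The crux is the strong convergence $\calM_{\eps_n}V_n\to\calM_0 V_0$ in $\fspaceL^2\at\Rset$, because weak convergence of $V_n$ alone would not suffice to close the limit equation. Obtaining it requires simultaneously (i) extracting local compactness from the Sobolev bound on $\Pi_{\eps_n}V_n$ furnished by Lemma \ref{Lem:InversOfB}, (ii) using the exponential decay of $W_0$ (and hence of $\calA_{m\eps_n}W_0$) to make tails uniformly small in $\eps_n$, and (iii) carefully handling the nested convolutions $\calA_{m\eps_n}$ by means of the singular expansion in Lemma \ref{Lem:LimitOperatorA}. Once this strong-convergence step is in place, the remaining perturbation arguments are routine.
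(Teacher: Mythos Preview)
Your proposal is correct and follows essentially the same route as the paper: a contradiction argument based on the rewriting $V_n=\calB_{\eps_n}^{-1}\bigl(\calM_{\eps_n}V_n+G_n\bigr)$, the almost-compactness from Lemma~\ref{Lem:InversOfB} (Sobolev control on $\Pi_{\eps_n}V_n$, $\DO{\eps_n^2}$ smallness of the high-frequency part), the decay of $W_0$ to handle tails, and Lemma~\ref{Lem:LeadingOrder} to force the even weak limit to vanish.

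Two packaging differences are worth flagging. First, the paper obtains $\calL_0 V_\infty=0$ directly by duality: for any smooth test function $\phi$ one has $\calL_{\eps_n}\phi\to\calL_0\phi$ strongly (this is what Lemma~\ref{Lem:PropertiesOfL} actually gives), so $\skp{V_\infty}{\calL_0\phi}=\lim_n\skp{\calL_{\eps_n}V_n}{\phi}=0$. Your route---first establishing strong convergence of $\calM_{\eps_n}V_n$ and then passing to the limit in the equation---works too, but note that Lemma~\ref{Lem:PropertiesOfL} does not literally say $\calL_{\eps_n}V_n\to\calL_0 V_0$ (since $V_n$ varies); you still need the duality step or an equivalent argument to identify the limit of $\calB_{\eps_n}V_n$. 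Second, for surjectivity the paper simply uses that $\calL_\eps$ is self-adjoint on $\fspaceL^2_\even\at\Rset$, so the a~priori estimate \eqref{Lem:InvertibilityOfLeps.PEqn10} yields trivial kernel, closed range, and trivial cokernel in one stroke---no separate Fredholm-alternative/compactness verification is needed. Your Fredholm argument is fine but slightly more elaborate than necessary.
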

\begin{proof} 
\ul{\emph{Preliminaries}}: Our strategy is to
show the existence of a constant $c_*>0$ such that
\begin{align}
\label{Lem:InvertibilityOfLeps.PEqn10}
\norm{\calL_\eps V}_2\geq c_* \norm{V}_2
\end{align}
holds for all $V\in\fspaceL^2_\even\at\Rset$ and all sufficiently small $\eps>0$, because this implies the desired result. In fact, \eqref{Lem:InvertibilityOfLeps.PEqn10} ensures that the operator
\begin{align*}
\calL_\eps:\fspaceL^2_\even\at\Rset\to \fspaceL^2_\even\at\Rset
\end{align*} 
has both trivial kernel and closed image. The symmetry of $\calL_\eps$ gives
\begin{align*}
\ker \calL_\eps =  \mathrm{coker}\, \calL_\eps
\end{align*}
and due to the closed image we conclude that $\calL_\eps$ is not only injective but also surjective. Moreover,
the $\eps$-uniform continuity of the inverse $\calL_\eps^{-1}$ is a further consequence of \eqref{Lem:InvertibilityOfLeps.PEqn10}.
\par
Now suppose for contradiction that such a constant $c_*$ does not exist. Then we can choose a sequence $\at{\eps_n}_{n\in\Nset}\subset\ocinterval{0}{1}$ with $\eps_n\to0$ as well as sequences $\at{V_n}_{n\in\Nset}\subset\fspaceL^2_\even\at\Rset$ and $\at{G_n}_{n\in\Nset}\subset\fspaceL^2_\even\at\Rset$ such that
\begin{align}
\label{Lem:InvertibilityOfLeps.PEqn8}
\calL_{\eps_n}V_n=G_n\,,\qquad \norm{V_n}_2=1\,,\qquad \norm{G_n}_2\quad\xrightarrow{n\to\infty}\quad0\,.
\end{align}
\par
\ul{\emph{Weak convergence to $0$}}:
By weak compactness we can assume that there exists $V_\infty\in\fspaceL^2_\even\at\Rset$ such that
\begin{align}
\label{Lem:InvertibilityOfLeps.PEqn1}
V_n\quad \xrightharpoonup{\;n\to\infty\;}\quad V_\infty\qquad \text{weakly in $\fspaceL^2\at\Rset$}\,,
\end{align} 
and using Lemma \ref{Lem:PropertiesOfL}  we find
\begin{align*}
\skp{V_\infty}{\calL_{0}\phi}=
\lim_{n\to\infty}\skp{V_n}{\calL_{\eps_n}\phi}=
\lim_{n\to\infty}\skp{\calL_{\eps_n}V_n}{\phi}=
\lim_{n\to\infty}\skp{G_n}{\phi}=0
\end{align*}
for any sufficiently smooth test function $\phi$.  In view of
the definition of the differential operator $\calL_0$ -- see \eqref{Eqn:OperatorB0}  and \eqref{Lem:LeadingOrder.Eqn1} -- we estimate
\begin{align*}
\abs{\int_\Rset W_0\at{x}\phi^{\prime\prime}\at{x}\dint{x}}\leq C \norm{\phi}_2
\end{align*}
for all $\phi\in\fspaceW^{2,2}\at\Rset$ and conclude that $V_\infty$ belongs to $\fspaceW^{2,2}\at\Rset$, where $\calL_0V_\infty=0$ holds due to 
\begin{align*}
\skp{\calL_{0} V_\infty}{\phi}=\skp{V_\infty}{\calL_{0}\phi}=0\,.
\end{align*}
In other words, the even function $V_\infty$ belongs to the kernel of $\calL_0$ and
\begin{align*}
V_\infty=0
\end{align*}
follows from Lemma \ref{Lem:LeadingOrder}.
\par
\ul{\emph{Further notations:}}
For the remaining considerations we abbreviate the constant from Lemma \ref{Lem:InversOfB} by $D$ and denote by $C$ any generic constant (whose value may change from line to line) that is independent of $n$ and $D$. We further choose $K>M$ sufficiently large such that
\begin{align}
\label{Lem:InvertibilityOfLeps.PEqn5a}
\sup\limits_{\abs{\xi}\geq K-M} W_0\at{\xi}\leq \frac{1}{4 D\sum_{m=1}^M \be_m m^3}\,,
\end{align}
and denote by $\chi_K$ the characteristic function of the interval
$I_K:=\ccinterval{-K}{{+}K}$. In what follows we write $V_n=V_n^\upidx{1}+V_n^\upidx{2}+V_n^\upidx{3}$ with
\begin{align*}
V_n^\upidx{1} := \chi_K \,\Pi_{\eps_n} V_n\,,\qquad
V_n^\upidx{2} := \at{1-\chi_K} \,\Pi_{\eps_n} V_n\,,\qquad
V_n^\upidx{3} :=\at{\mathrm{id}- \Pi_{\eps_n}}V_n
\end{align*}
and observe that these definitions imply
\begin{align}
\label{Lem:InvertibilityOfLeps.PEqn5b}
\max\limits_{i\in\{1,2,3\}}\bnorm{V_n^\upidx{i}}_{2}\leq \norm{V_n}_2=1\,.
\end{align}
We also set
\begin{align*}
U_n^\upidx{i}:=\calM_{\eps_n} V_n^\upidx{i}
\end{align*} 
and combine Lemma \ref{Lem:PropertiesOperatorA} with the smoothness of $W_0$ to obtain
\begin{align}
\label{Lem:InvertibilityOfLeps.PEqn7}
\bnorm{U_n^\upidx{i}}_2\leq C\bnorm{V_n^\upidx{i}}_2\,.
\end{align}
Moreover, by construction we have
\begin{align*}
V_n=\calB_{\eps_n}^{-1}\Bat{U_n^\upidx{1}+U_n^\upidx{2}+U_n^\upidx{3}+G_n}\,,
\end{align*}
so the estimate
\begin{align}
\label{Lem:InvertibilityOfLeps.PEqn6a}
\bnorm{V_n^\upidx{1}+V_n^\upidx{2}}_{2,2} + \eps_n^{-2} \bnorm{V_n^\upidx{3}}_{2}\leq D \at{\bnorm{U_n^\upidx{1}}_2+\bnorm{U_n^\upidx{2}}_2+\bnorm{U_n^\upidx{3}}_2 + \norm{G_n}_2}
\end{align}
is provided by Lemma \ref{Lem:InversOfB}.
\par
\ul{\emph{Strong convergence of $V_n^\upidx{1}$ and  $V_n^\upidx{3}$}}: 
Inserting \eqref{Lem:InvertibilityOfLeps.PEqn8}, \eqref{Lem:InvertibilityOfLeps.PEqn5b}, and \eqref{Lem:InvertibilityOfLeps.PEqn7} into \eqref{Lem:InvertibilityOfLeps.PEqn6a} gives
\begin{align}
\label{Lem:InvertibilityOfLeps.PEqn6b}
\bnorm{V_n^\upidx{1}+V_n^\upidx{2}}_{2,2} + \eps_n^{-2} \bnorm{V_n^\upidx{3}}_{2}\leq CD
\end{align}
and hence
\begin{align}
\label{Lem:InvertibilityOfLeps.PEqn2}
V_n^\upidx{3}\quad\xrightarrow{n\to\infty}\quad 0\qquad\text{strongly in $\fspaceL^2\at\Rset$}\,.
\end{align}
Thanks to
\begin{align}
\label{Lem:InvertibilityOfLeps.PEqn2a}
V_n^\upidx{2}=0\qquad \text{in}\qquad \fspaceL^2\at{I_K}
\end{align}
we also infer from \eqref{Lem:InvertibilityOfLeps.PEqn6b} the estimate
\begin{align*}
\bnorm{V_n^\upidx{1}}_{2,2, I_K}\leq
\bnorm{V_n^\upidx{1}+V_n^\upidx{2}}_{2,2}\leq CD\,,
\end{align*}
where $\bnorm{\cdot}_{2,2, I_K}$ denotes the norm in $\fspaceW^{2,2}\at{I_K}$.
Since $\fspaceW^{2,2}\at{I_K}$ is compactly embedded into $\fspaceL^{2}\at{I_K}$, we conclude that the sequence $\bat{V_n^\upidx{1}}_{n\in\Nset}$ is precompact in $\fspaceL^2\at{I_K}$. On other hand, the weak convergence \eqref{Lem:InvertibilityOfLeps.PEqn1} combined with \eqref{Lem:InvertibilityOfLeps.PEqn2} and \eqref{Lem:InvertibilityOfLeps.PEqn2a}
implies
\begin{align*}
V_n^\upidx{1}\quad \xrightharpoonup{\;n\to\infty\;}\quad V_\infty=0\qquad \text{weakly in}\quad \fspaceL^2\at {I_K},
\end{align*} 
and in summary we find $V_n^\upidx{1}\to0$ strongly in $\fspaceL^2\at{I_K}$ by standard arguments.
This even implies
\begin{align}
\label{Lem:InvertibilityOfLeps.PEqn3}
V_n^\upidx{1}\quad\xrightarrow{n\to\infty}\quad 0\qquad\text{strongly in $\fspaceL^2\at{\Rset}$}
\end{align}
as $V_n^\upidx{1}$  vanishes outside the interval $I_K$.
\par
\ul{\emph{Upper bounds for $\nnorm{U_n^\upidx{2}}_2$}}: %
Since the functions $V_n^\upidx{2}$ are supported in $\Rset\setminus I_K$, the functions $\calA_{m\eps_n}V_n^\upidx{2}$ are supported in $\Rset\setminus I_{K-m\eps_n/2}=\{{x}:\abs{{x}}>K-m\eps_n/2\}$. Moreover, we have
\begin{align*}
\babs{\at{\calA_{m\eps_n}W_0}\at{\xi}}\;\leq
\sup_{\abs{{x}-\xi}\leq m\eps_n/2} {W_0\at{x}}\;\leq
\sup_{\abs{{x}-\xi}\leq M/2} {W_0\at{x}}
\end{align*}
for any given $\xi\in\Rset$. Therefore, and using 
\begin{align*}
\Babs{\bat{\calA_{m\eps} V_n^\upidx2}\at{x}}\leq \Bat{\calA_{m\eps} \babs{V_n^\upidx2}}\at{x}\qquad\text{for all}\quad x\in\Rset\,,
\end{align*}
we estimate
\begin{align*}
\abs{\bat{\calA_{m\eps_n}W_0}\bat{\calA_{m\eps_n}V_n^\upidx2}}&\leq \at{\sup\limits_{\abs{\xi}\geq K-M/2}\babs{\at{\calA_{m\eps_n}W_0}\at\xi}}
\babs{\calA_{m\eps_n}V_n^\upidx{2}}\\&\leq
\at{\sup\limits_{\abs{\xi}\geq K-M}{W_0\at\xi}}
\calA_{m\eps_n}\babs{V_n^\upidx{2}}\,,
\end{align*}
so Lemma \ref{Lem:PropertiesOperatorA} gives
\begin{align*}
\norm{\calA_{m\eps_n}\at{\bat{\calA_{m\eps_n}W_0}\bat{\calA_{m\eps_n}V_n^\upidx2}}}_2\leq
\at{\sup\limits_{\abs{\xi}\geq K-M}{W_0\at\xi}}
\bnorm{V_n^\upidx{2}}_2
\end{align*}
and hence
\begin{align}
\label{Lem:InvertibilityOfLeps.PEqn4}
\bnorm{U_n^\upidx{2}}_2\leq \at{\sup\limits_{\abs{\xi}\geq K-M}{W_0\at\xi}}\at{2\sum_{m=1}^M\be_mm^3}\bnorm{V_n^\upidx{2}}_2\leq
\frac{1}{2D}
\end{align}
due to \eqref{Lem:InvertibilityOfLeps.PEqn5a} and \eqref{Lem:InvertibilityOfLeps.PEqn5b}.
\par
\ul{\emph{Derivation of the contradiction}}: 
Combining \eqref{Lem:InvertibilityOfLeps.PEqn6a} with
\eqref{Lem:InvertibilityOfLeps.PEqn7} gives
\begin{align*}
\bnorm{V_n}_2&\leq 
\bnorm{V_n^\upidx{1}+V_n^\upidx{2}}_2+\bnorm{V_n^\upidx{3}}_2
\\&\leq
D\at{\bnorm{U_n^\upidx{1}}_2+
\bnorm{U_n^\upidx{2}}_2+\bnorm{ U_n^\upidx{3}}_2+
\bnorm{ G_n}_2}
\\&\leq
D\Bat{C\bnorm{V_n^\upidx{1}}_2+\bnorm{U_n^\upidx{2}}_2+C\bnorm{V_n^\upidx{3}}_2+\bnorm{ G_n}_2} \,,
\end{align*}
and passing to the limit $n\to\infty$ we get
\begin{align*}
\limsup_{n\to\infty}\norm{V_n}_2\leq D\limsup_{n\to\infty} \bnorm{U_n^\upidx{2}}_2\leq\tfrac12
\end{align*}
thanks to \eqref{Lem:InvertibilityOfLeps.PEqn8}$_3$, \eqref{Lem:InvertibilityOfLeps.PEqn2},
\eqref{Lem:InvertibilityOfLeps.PEqn3}, and \eqref{Lem:InvertibilityOfLeps.PEqn4}. This, however, contradicts the normalization condition \eqref{Lem:InvertibilityOfLeps.PEqn8}$_2$. In particular, we have shown the existence of a constant $c_*$ as in \eqref{Lem:InvertibilityOfLeps.PEqn10} and the proof is complete.
\end{proof}
%
\subsection{Nonlinear fixed point argument}\label{sect:proof.3}
%
%
Setting $W_\eps=W_0+\eps^2V_\eps$, the nonlocal traveling wave equation \eqref{Eqn:RescaledTWEqn} is equivalent to
\begin{align*}
\calL_\eps V_\eps = R_\eps + S_\eps+\eps^2 \,\calQ_\eps\ato{V_\eps}+\eps^2\,\calN_\eps\ato{V_\eps}
\end{align*} 
with 
\begin{align}
\label{Eqn:NonlinOP.V}
\calN_\eps\ato{V}:=
\frac{ \calP_\eps \ato{W_0+\eps^2 V}-\calP_\eps\ato{W_0}}{\eps^2}\,,
\end{align}
where  $\calQ_\eps$, $\calP_\eps$  and  $R_\eps$,  $S_\eps$ have been introduced in \eqref{Eqn:NonlinOp.W} and \eqref{Lem.epsResidual.Eqn1}, respectively.
Since $\calL_\eps$ can be inverted for all sufficiently small $\eps>0$, we finally arrive at
the following result.
\begin{theorem}
[existence and uniqueness of the corrector $V_\eps$] 
\label{Thm:FixedPoints}
There exist constants $D>0$ and $0<\eps_*\leq1$ such that
the nonlinear operator $\calF_\eps$ with
\begin{align}
\label{Thm:FixedPoints.Eqn1}
\calF_\eps \ato{V} := \calL_\eps^{-1}\Bat{R_\eps + S_\eps + \eps^2\, \calQ_\eps\ato{V}+\eps^2\, \calN_\eps\ato{V}}
\end{align}
admits for any $0< \eps\leq\eps_*$ a unique fixed point $V_\eps$ in the set $B_D=\{V:\fspaceL^2_\even\at\Rset\;:\; \norm{V}_2\leq{D}\}$.
\end{theorem}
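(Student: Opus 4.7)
The plan is to verify the hypotheses of Banach's contraction mapping theorem for $\calF_\eps$ acting on the closed ball $B_D\subset\fspaceL^2_\even\at\Rset$. First I would observe that $\calF_\eps$ respects parity: since $W_0$ is even and the operators $\calA_{m\eps}$ preserve evenness by Lemma \ref{Lem:PropertiesOperatorA}, the data $R_\eps$, $S_\eps$ and the nonlinear terms $\calQ_\eps\ato{V}$, $\calN_\eps\ato{V}$ all lie in $\fspaceL^2_\even\at\Rset$ whenever $V$ does. Combined with Theorem \ref{Lem:InvertibilityOfLeps}, which provides the $\eps$-uniform estimate $\norm{\calL_\eps^{-1}G}_2\leq C\norm{G}_2$ on the even subspace, this guarantees that $\calF_\eps$ sends $B_D$ into $\fspaceL^2_\even\at\Rset$.

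The core of the argument consists in deriving $\eps$-uniform a-priori and Lipschitz bounds for the two nonlinearities on $B_D$. For the quadratic part I would combine $\norm{\at{\calA_{m\eps}V}^2}_2\leq \norm{\calA_{m\eps}V}_\infty\norm{\calA_{m\eps}V}_2$ with the inequalities $\norm{\calA_{m\eps}V}_\infty\leq\at{m\eps}^{-1/2}\norm{V}_2$ and $\norm{\calA_{m\eps}V}_2\leq\norm{V}_2$ from Lemma \ref{Lem:PropertiesOperatorA} to obtain
\begin{align*}
\norm{\calQ_\eps\ato{V}}_2\leq C\eps^{-1/2}\norm{V}_2^2,\qquad \norm{\calQ_\eps\ato{V_1}-\calQ_\eps\ato{V_2}}_2\leq C\eps^{-1/2}\bat{\norm{V_1}_2+\norm{V_2}_2}\norm{V_1-V_2}_2,
\end{align*}
where the Lipschitz estimate comes from the polarization identity $a^2-b^2=\at{a-b}\at{a+b}$. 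For the higher-order term $\calN_\eps$, whose definition carries the dangerous prefactor $\eps^{-8}$, I would expand $\Psi_m^\prime$ pointwise via the mean value theorem, use $\babs{m\eps^2\calA_{m\eps}\at{W_0+\eps^2 V}}\leq Cm\eps^2$ for $V\in B_D$ and $\eps$ small, and then invoke the cubic flatness bound $\babs{\Psi_m^{\prime\prime}\at{r}}\leq\ga_m r^2$ from Assumption \ref{MainAssumption} to extract an additional factor $\eps^8$. The resulting pointwise inequality $\babs{\Psi_m^\prime\at{\cdots}-\Psi_m^\prime\at{\cdots}}\leq Cm^3\eps^8\babs{\calA_{m\eps}\at{V_1-V_2}}$ plus the non-expansiveness of $\calA_{m\eps}$ yield
\begin{align*}
\norm{\calN_\eps\ato{V}}_2\leq C\norm{V}_2,\qquad \norm{\calN_\eps\ato{V_1}-\calN_\eps\ato{V_2}}_2\leq C\norm{V_1-V_2}_2.
\end{align*}

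Combining these four bounds with the residual estimate $\norm{R_\eps}_2+\norm{S_\eps}_2\leq C_0$ from Lemma \ref{Lem.epsResidual} and the $\eps$-uniform invertibility of $\calL_\eps$, I obtain
\begin{align*}
\norm{\calF_\eps\ato{V}}_2\leq C_1\bat{C_0+\eps^{3/2}\norm{V}_2^2+\eps^2\norm{V}_2}
\end{align*}
as well as the Lipschitz bound $\norm{\calF_\eps\ato{V_1}-\calF_\eps\ato{V_2}}_2\leq C_2\bat{\eps^{3/2}D+\eps^2}\norm{V_1-V_2}_2$ for $V,V_1,V_2\in B_D$. Fixing $D:=2C_1C_0$ and then $\eps_*$ small enough turns $\calF_\eps$ into a contractive self-map of $B_D$, and Banach's theorem supplies the unique fixed point $V_\eps$. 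The main technical obstacle will be the careful bookkeeping of the $\eps$-powers in $\calN_\eps$: the $\eps^{-8}$ prefactor must be absorbed entirely by the cubic flatness of $\Psi_m^\prime$ near the origin, which in turn is only available as long as the arguments of $\Psi_m^{\prime\prime}$ remain in the regime $\babs{\cdot}\leq 1$ covered by Assumption \ref{MainAssumption}.
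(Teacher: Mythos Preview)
Your proposal is correct and follows essentially the same route as the paper: bound $R_\eps+S_\eps$ via Lemma~\ref{Lem.epsResidual}, control $\eps^2\calQ_\eps$ through $\norm{\calA_{m\eps}V}_\infty\leq(m\eps)^{-1/2}\norm{V}_2$, handle $\eps^2\calN_\eps$ by the mean value theorem together with $\abs{\Psi_m''(r)}\leq\ga_m r^2$, then fix $D$ and shrink $\eps_*$. The only cosmetic difference is that the paper keeps the factor $(C+\eps^{3/2}D)^2$ explicit in the $\calN_\eps$ estimates rather than absorbing it into a generic constant, but the final choice of $D$ and $\eps_*$ is identical.
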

\begin{proof}
Our strategy is to demonstrate that the operator $\calF_\eps$ maps $B_D$ contractively into itself provided that $D$ is sufficiently large and $\eps$ sufficiently small; the desired result is then a direct consequence of the Banach fixed-point theorem. Within this proof we denote by $C$ any generic constant that is independent of $D$ and $\eps$. We also observe that $\abs{\calA_\eta Z}\leq \calA_\eta\abs{Z}$ holds for any $Z\in\fspaceL^1_{\mathrm{loc}}\at\Rset$ and $\eta>0$, and recall that
\begin{align*}
\norm{R_\eps+S_\eps}_2\leq C
\end{align*}
is provided by Lemma \ref{Lem.epsResidual}.
\par 
\emph{\ul{Estimates for the quadratic terms}}: 
For $V\in B_D$ we find
\begin{align*}
\babs{\eps^2 \calQ_\eps\ato{V}}\leq\eps^2 \sum_{m=1}^M
\be_m  m^3 \norm{\calA_{m\eps}V}_\infty \calA_{m\eps}^2\abs{V}\leq 
\eps^{3/2}\at{\sum_{m=1}^M \beta_mm^{5/2} D}\calA_{m\eps}^2\babs{V}\,,
\end{align*}
where we used the estimate \eqref{Lem:PropertiesOperatorA.Eqn1}$_1$, and
in view of \eqref{Lem:PropertiesOperatorA.Eqn1}$_2$  we obtain
\begin{align*}
\bnorm{\eps^2 \calQ_\eps\ato{V}}_2
\leq \eps^{3/2} C D \norm{\calA_{m\eps}^2V}_2
\leq \eps^{3/2} C D \norm{V}_2\leq\eps^{3/2} C D^2.
\end{align*}
In the same way we verify the estimate
\begin{align*}
\bnorm{\eps^2 \calQ_\eps\ato{V_2}-\eps^2 \calQ_\eps\ato{V_2}}_2&\leq \norm{\eps^2\sum_{m=1}^M
\be_m  m^3 \Bat{\norm{\calA_{m\eps}V_2}_\infty+
\norm{\calA_{m\eps}V_1}_\infty}\calA_{m\eps}^2\babs{V_2-V_1}}_2
\\&\leq 
\eps^{3/2}CD\norm{\calA_{m\eps}^2\abs{V_2-V_1}}_2\leq
\eps^{3/2}CD\norm{V_2-V_1}_2
\end{align*}
for arbitrary $V_1,V_2\in B_D$.
\par 
\emph{\ul{Estimates for the higher order terms}}:
For $V_1,V_2\in B_D$ we set
$Z_{m,\eps,i}:=\eps^2m \calA_{m\eps}\at{W_0+\eps^2 V_i}$ and employ  \eqref{Lem:PropertiesOperatorA.Eqn1}$_1$ to estimate
\begin{align*}
\norm{Z_{m,\eps,i}}_\infty&\leq
\eps^2m\norm{\calA_{m\eps}W_0}_\infty+
\eps^4m\norm{\calA_{m\eps}V_i}_\infty
\\&\leq
\eps^2m\norm{W_0}_\infty+
\eps^{7/2}m^{1/2}\norm{V_i}_2
\\&\leq
\eps^2 m\at{C+\eps^{3/2}D}=:\zeta_{m,\eps}\,.
\end{align*}
Due to the intermediate value theorem as well as the properties of $\Psi_m^{\prime\prime}$ we get
\begin{align*}
\Babs{\eps^2\calN_\eps\ato{V_2}-\eps^2\calN_\eps\ato{V_1}}&\leq
\sum_{m=1}^M m\abs{\frac{\Psi^\prime_m\bat{Z_{m,\eps,2}}-\Psi^\prime_m\bat{Z_{m,\eps,1}}}{\eps^6}}
\\&
\leq
\sum_{m=1}^M  
\frac{m\ga_m \zeta _{m,\eps}^2 \abs{Z_{m,\eps,2}-Z_{m,\eps,1}}}{\eps^6}
\\&
\leq
\sum_{m=1}^M 
\frac{m^2\ga_m \zeta _{m,\eps}^2 \abs{\calA_{m\eps}V_2 - \calA_{m\eps}V_1}}{\eps^2}
\\&\leq \eps^2\at{C+\eps^{3/2}D}^2\at{\sum_{m=1}^M
\ga_m m^{4}}
\calA_{m\eps}\babs{V_2-V_1}
\end{align*}
and hence
\begin{align*}
\bnorm{\eps^2 \calN_\eps\ato{V_2}-\eps^2\calN_\eps\ato{V_1}}_2\leq
\eps^2C \at{C+\eps^{3/2}D}^2\bnorm{V_2-V_1}_2\,
\end{align*}
after integration. A particular consequence is the estimate
\begin{align*}
\bnorm{\calN_\eps\ato{V}}_2\leq 
\eps^2CD \at{C+\eps^{3/2}D}^2
\end{align*}
for any $V\in B_D$, where we used that $\calN_\eps\ato{0}=0$.
\par 
\emph{\ul{Concluding arguments}}:
Combining all estimates derived so far with the definition of $\calF_\eps$ and the bounds for  $\calL_\eps^{-1}$ -- see Lemma \ref{Lem:InvertibilityOfLeps} -- we verify
\begin{align*}
\norm{\calF_\eps\ato{V}}_2\leq C + \eps^{3/2}CD^2 +
\eps^2CD \at{C+\eps^{3/2}D}^2
\end{align*}
for all $V\in B_D$ as well as
\begin{align*}
\norm{\calF_\eps\ato{V_2}}_2-\norm{\calF_\eps\ato{V_1}}_2\leq\at{\eps^{3/2}CD+
\eps^2C \at{C+\eps^{3/2}D}^2}\norm{V_2-V_1}_2
\end{align*}
for all $V_1,V_2\in B_D$. To complete the proof we first set $D:=2\,C$ and choose afterwards $\eps>0$ sufficiently small.
\end{proof}
\begin{corollary}[main result from \S\ref{sect:intro}]
\label{Cor:Summary}
For any sufficiently small $\eps>0$, 
the reformulated traveling wave equation \eqref{eq:scaledfpu1a} admits a unique even solution $W_\eps$ with speed $\sqrt{c_0^2+\eps^2}$ such that
\begin{align*}
\norm{W_\eps-W_0}_{2}+\norm{W_\eps-W_0}_{\infty}\leq C\eps^2
\end{align*}
holds for some constant $C$ independent of $\eps$. Moreover, $W_\eps$ is nonnegative and smooth.
\end{corollary}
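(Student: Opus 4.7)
\emph{Plan of proof.} The corollary is a repackaging of Theorem~\ref{Thm:FixedPoints} together with three regularity/positivity upgrades, so my plan is to extract each claim in turn.

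\textbf{Step 1 (existence and $\fspaceL^2$-estimate).} Theorem~\ref{Thm:FixedPoints} provides a unique fixed point $V_\eps\in B_D\subset\fspaceL^2_\even\at\Rset$ of $\calF_\eps$ for all $0<\eps\leq\eps_*$. Setting $W_\eps:=W_0+\eps^2V_\eps$ and reversing the derivation of \eqref{Eqn:FixedPoint} from \eqref{Eqn:RescaledTWEqn}, I obtain a solution of \eqref{eq:scaledfpu1a} with wave speed $\sqrt{c_0^2+\eps^2}$. The $\fspaceL^2$-part of the estimate follows immediately from $\norm{W_\eps-W_0}_2=\eps^2\norm{V_\eps}_2\leq \eps^2 D$. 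For uniqueness among even solutions satisfying the claimed $\DO{\eps^2}$-bound, any such $\tilde W_\eps$ yields a corrector $\tilde V_\eps=\eps^{-2}(\tilde W_\eps-W_0)$ with $\nnorm{\tilde V_\eps}_2$ controlled by the constant in the bound; enlarging $D$ if necessary places $\tilde V_\eps$ in $B_D$, and uniqueness of the fixed point forces $\tilde V_\eps=V_\eps$.

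\textbf{Step 2 ($\fspaceL^\infty$-estimate).} This is the technically delicate step. Writing $\calL_\eps V_\eps=F_\eps$ with $F_\eps:=R_\eps+S_\eps+\eps^2\calQ_\eps\ato{V_\eps}+\eps^2\calN_\eps\ato{V_\eps}$, the estimates inside the proof of Theorem~\ref{Thm:FixedPoints}, together with Lemma~\ref{Lem.epsResidual}, give $\norm{F_\eps}_2\leq C$. Rearranging into $\calB_\eps V_\eps=\calM_\eps V_\eps+F_\eps$ and invoking Lemma~\ref{Lem:InversOfB}, the decomposition $V_\eps=\Pi_\eps V_\eps+(\id-\Pi_\eps)V_\eps$ satisfies $\nnorm{\Pi_\eps V_\eps}_{2,2}\leq C$ and $\nnorm{(\id-\Pi_\eps)V_\eps}_2\leq C\eps^2$. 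The first bound gives $\nnorm{\Pi_\eps V_\eps}_\infty\leq C$ via the one-dimensional embedding $\fspaceW^{2,2}\at\Rset\hookrightarrow\fspaceL^\infty\at\Rset$. For the high-frequency remainder I use the Fourier bound $b_\eps\at{k}\geq c/\eps^2$ on $\abs{k}\geq 4/\eps$ to also control $\nnorm{\partial_x(\id-\Pi_\eps)V_\eps}_2\leq C\eps$, and then apply the Gagliardo--Nirenberg interpolation $\norm{f}_\infty\leq C\nnorm{f}_2^{1/2}\nnorm{f'}_2^{1/2}$ on the remainder. Combining both parts yields $\norm{V_\eps}_\infty\leq C$ and hence $\norm{W_\eps-W_0}_\infty\leq C\eps^2$.

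\textbf{Step 3 (smoothness and nonnegativity).} For smoothness I bootstrap in the nonlinear equation $\calB_\eps W_\eps=\calQ_\eps\ato{W_\eps}+\eps^2\calP_\eps\ato{W_\eps}$: starting from $W_\eps\in\fspaceL^2\at\Rset$, the convolution operators $\calA_{m\eps}$ improve regularity (Lemma~\ref{Lem:PropertiesOperatorA}), the nonlinearities are smooth in their arguments, and the smoothing of $\calB_\eps^{-1}$ on low frequencies (together with the already established $\fspaceL^\infty$-control of $W_\eps$) lets me iterate the equation to obtain arbitrarily many derivatives. For pointwise nonnegativity, on any compact set the bound $W_\eps\geq W_0-C\eps^2$ combined with the strict positivity of $W_0$ is enough; in the tails, where $W_0\at{x}$ is exponentially small, I invoke the invariance properties of $\calB_\eps^{-1}$ from Corollary~\ref{Cor:InvarianceProperties} together with a Neumann-series representation of $W_\eps$ based on Lemma~\ref{Lem:vonNeumann}, exploiting that the leading contribution $(\eps^2/(\eps^2+c_0^2))\calB_\eps^{-1}\calQ_\eps\ato{W_\eps}$ is manifestly nonnegative and that the $\calP_\eps$-contribution is cubically small.

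\textbf{Main obstacle.} The principal difficulty is Step~2: the fixed-point framework is purely $\fspaceL^2$-based, while the claim requires pointwise control. The splitting $\id=\Pi_\eps+(\id-\Pi_\eps)$ from Lemma~\ref{Lem:InversOfB} is indispensable, and the high-frequency part needs a sharper Fourier argument (rather than a naive embedding) to yield a uniform $\fspaceL^\infty$-bound. A secondary, but conceptually separate, issue is the uniform nonnegativity of $W_\eps$ in the tails, since the smallness of $W_0$ there prevents a direct perturbation argument and forces the structural use of the invariance properties of $\calB_\eps^{-1}$.
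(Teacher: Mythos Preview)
Your Steps~1 and~3 are broadly in line with the paper, though the paper's nonnegativity argument is cleaner than yours: it simply observes that the right-hand side of \eqref{Eqn:RescaledTWEqn} is pointwise nonnegative (since $\calQ_\eps\ato{W_\eps}\geq0$ by $\be_m>0$, and the cubic contribution satisfies $\babs{\eps^2\calP_\eps\ato{W_\eps}}\leq C\eps^2\calQ_\eps\ato{W_\eps}$ thanks to \eqref{MainAssumption.Eqn1} and the uniform boundedness of $\calA_{m\eps}W_\eps$), and then applies Corollary~\ref{Cor:InvarianceProperties} once. No splitting into ``compact set'' versus ``tails'' is needed.

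Step~2, however, has a genuine gap. Your claimed estimate $\nnorm{\partial_x\at{\id-\Pi_\eps}V_\eps}_2\leq C\eps$ does \emph{not} follow from the Fourier bound $b_\eps\at{k}\geq c/\eps^2$ on $\abs{k}\geq 4/\eps$ alone. The symbol $b_\eps$ stays essentially constant ($\approx 1+c_0^2/\eps^2$) for large $\abs{k}$---it does not grow---so $\calB_\eps^{-1}$ gives no high-frequency derivative gain whatsoever. If $G=\calM_\eps V_\eps+F_\eps$ were merely in $\fspaceL^2$, the integral $\int_{\abs{k}\geq4/\eps}k^2\babs{\widehat V_\eps\at{k}}^2\dint{k}\leq \at{\eps^4/c^2}\int_{\abs{k}\geq4/\eps}k^2\babs{\widehat G\at{k}}^2\dint{k}$ would be unbounded.

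What rescues the argument is the structure of $G$: every term in $\calM_\eps V_\eps$, $\calQ_\eps\ato{V_\eps}$, $\calN_\eps\ato{V_\eps}$, $S_\eps$, and the $\calQ_\eps$-part of $R_\eps$ carries an \emph{outer} $\calA_{m\eps}$, while the remaining pieces of $R_\eps$ are smooth functions of $W_0$. The outer $\sinc\at{m\eps k/2}$ contributes an extra factor $\leq 2/\at{m\eps\abs{k}}$ on $\abs{k}\geq 4/\eps$, and this is precisely what makes the high-frequency tail controllable (either via your Gagliardo--Nirenberg route, or more directly by bounding $\nnorm{\widehat{\at{\id-\Pi_\eps}V_\eps}}_1$ through Cauchy--Schwarz). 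The paper handles this step differently---it invokes the von~Neumann representation (Lemma~\ref{Lem:vonNeumann}), which gives $\nnorm{\calB_\eps^{-1}}_{\fspaceL^\infty\to\fspaceL^\infty}\leq1$, combined with a re-inspection of the proof of Theorem~\ref{Thm:FixedPoints} to control the right-hand side---but any honest execution ultimately needs the same structural observation about $G$.
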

\begin{proof}
The existence and local uniqueness of $W_\eps=W_0+\eps^2 V_\eps$ along with the $\fspaceL^2$-estimate is a direct consequence of Theorem \ref{Thm:FixedPoints}. Moreover, re-inspecting the arguments from the proof of Theorem \ref{Thm:FixedPoints} and using Lemma \ref{Lem:vonNeumann} we easily derive an uniform $\fspaceL^\infty$-bound for the corrector $V_\eps$. Finally, the right hand side in \eqref{Eqn:RescaledTWEqn} is -- at least for sufficiently small $\eps>0$ -- nonnegative due to the properties of the KdV wave $W_0$ and the potential $\Phi$, see Lemma \ref{Lem:LeadingOrder} and Assumption \ref{MainAssumption}. The nonnegativity of $W_\eps$ is hence granted by Corollary~\ref{Cor:InvarianceProperties}.
\end{proof}
The constants in the proof of Theorem \ref{Thm:FixedPoints} are, of course, far from being optimal. In general, a solution branch $\eps\to W_\eps\in\fspaceL^2_\even\at\Rset$ on an interval $\ccinterval{0}{\eps_*}$ can be continued for $\eps>\eps_*$  as long as the linearization of the traveling wave equation around $W_{\eps_*}$ provides an operator $\calL_{\eps_*}$ that can be inverted on the space $\fspaceL^2_\even\at\Rset$. Since the shift symmetry always implies that $W_{\eps_*}^\prime$ is an odd  kernel function of $\calL_{\eps_*}$, the unique continuation can hence only fail if the eigenvalue $c_{\eps_*}^2$ of the linearized traveling wave operator
\begin{align}
\label{Eqn:LinTWOp}
V\mapsto \sum_{m=1}^M m^2 A_{m\eps_*}\Phi_m^{\prime\prime}\at{m\eps_*^2 A_{m\eps_*} W_{\eps_*}}A_{m\eps_*} V
\end{align}
is not simple anymore. Unfortunately, almost nothing is known about the spectral properties of the operator \eqref{Eqn:LinTWOp} for moderate values $\eps_*$. It remains a challenging task to close this gap, especially since any result in this direction should have implications concerning the orbital stability of $W_{\eps_*}$.
\par
For $M=1$ it has also been shown in \cite[Propositions 5.5 and 7.1]{FP99} that the distance profile $\calA_\eps W_\eps$ is unimodal (`monotonic falloff') and decays exponentially for $x\to\pm\infty$. For $M>1$, it should be possible to apply a similar analysis to the velocity profile $W_\eps$ but the technical details are much more involved. It remains open to identify alternative and more robust proof strategies. For instance, if one could show that the waves from Corollary \ref{Cor:Summary} can be constructed by some variant of the abstract iteration scheme 
\begin{align*}
W\mapsto \calB_\eps^{-1}\at{Q_\eps\ato{W}+\eps^2\calP_\eps\ato{W}}\,,
\end{align*}
the unimodality of $W_\eps$ would be implied by the invariance properties of $\calA_{m\eps}$ and $\calB_\eps^{-1}$, see Lemma \ref{Lem:PropertiesOperatorA} and Corollary \ref{Cor:InvarianceProperties}. A similar argument could be used for the exponential decay because $\calA_{m\eps}$ maps a function with decay rate $\la$ to a function that decays with rate
\begin{align*}
\bar\la = \frac{\sinh\at{\tfrac12 \eps m \la }}{\tfrac12 \eps m \la}
\end{align*}
and since the von Neumann formula from Lemma \ref{Lem:vonNeumann} provides corresponding  expressions for $\calB_\eps^{-1}$; see \cite{HR10} for a similar argument to identify the decay rates of front-like traveling waves. In this context we further emphasize that only supersonic waves can be expected to decay exponentially. For subsonic waves with speed $c_\eps^2<c_0^2$, the linearization of the traveling wave equation \eqref{eq:scaledfpu1} predicts tails oscillations and hence non-decaying waves, see \cite{HMSZ13} for a similar analysis with non-convex interaction potentials.
%
%
\appendix
\addcontentsline{toc}{section}{List of symbols}
\section*{List of symbols}
\begin{tabular}{lcllllllll}
&$\al_m$, $\be_m$& linear and quadratic coefficients in force terms&& \eqref{eqn:ForceTerms}\\
& $\ga_m$& bounds for higher order force terms&& \eqref{MainAssumption.Eqn1}\\
&$c_\eps$& speed of the wave&& \eqref{eq:speedscaling}\\
\\ %
&$U_\eps$& position profile&&\eqref{eqn:TW.ansatz}\\
&$W_\eps$& velocity profile&&\eqref{Eqn:Def.Vel.prof}\\
&$V_\eps$&  corrector to the velocity profile&& \eqref{eqn.def.corrector}\\ %
&$R_\eps$, $S_\eps$&  residual terms with respect to $W_0$&&
\eqref{Lem.epsResidual.Eqn1}\\ %
\\ %
&$\calA_\eta$, $a_\eta$& convolution operator and its symbol function&& \eqref{eq:intoperator}, \eqref{Eqn:SymbolFct}\\
&$\calB_\eps$, $b_\eps$& auxiliary operator and its symbol function && \eqref{Eqn:OperatorBeps},  \eqref{Eqn:OperatorBSymb}\\
&$\Pi_\eps$, $\pi_\eps$& cut-off in Fourier space&&\eqref{eqn.cutoff.def} \\
&$\calQ_\eps$& quadratic terms in $W_\eps$&&\eqref{Eqn:NonlinOp.W} \\
&$\calP_\eps$& cubic and higher order terms in $W_\eps$&&\eqref{Eqn:NonlinOp.W} \\
&$\calM_\eps$& quadratic combination of $W_0$ and $V_\eps$&&\eqref{Eqn:DefLandM} \\
&$\calL_\eps$& linear terms in $V_\eps$&&\eqref{Eqn:DefLandM}\\
&$\calN_\eps$& remainder terms in   $V_\eps$&&\eqref{Eqn:NonlinOP.V} \\
&$\calF_\eps$& fixed point operator for $V_\eps$&& \eqref{Thm:FixedPoints.Eqn1} \\
\\
&$W_0$, $c_0$& velocity profile and speed of the KdV wave&&Lemma \ref{Lem:LeadingOrder}
\\
&$\calB_0$, $\calQ_0$& formal limits of 
$\calB_\eps$,  $\calQ_\eps$&&
\eqref{Eqn:OperatorB0},  \eqref{Eqn:NonlinOp.Q}\\
&$\calL_0$, $\calM_0$& formal limits of 
$\calL_\eps$, $\calM_\eps$&&
\eqref{Lem:LeadingOrder.Eqn1}
\end{tabular}
\addcontentsline{toc}{section}{Acknowledgements}
\section*{Acknowledgements}
The authors are grateful for the support by the Deutsche Forschungsgemeinschaft (DFG individual grant HE 6853/2-1) and the Austrian Science Fund (FWF grant J3143).
%
\addcontentsline{toc}{section}{Bibliography}

\end{document}